\newcommand{\bigo}{O}
\newcommand{\eps}{\varepsilon}
\newcommand{\Syes}{S^\textrm{yes}}
\newcommand{\saccept}{s^\textrm{accept}}
\newcommand{\sinit}{s^0}
\newcommand{\DFA}{\textrm{DFA}}
\newcommand{\N}{\mathbb{N}}
\newcommand{\nobb}{L_\text{\rm no $bb$}}
\newcommand{\Lpal}{L_\textrm{pal}}
\newcommand{\Lcopy}{L_\textrm{copy}}
\renewcommand{\wp}{\mathcal{P}}
\newcommand{\blank}{\textvisiblespace}
\newcommand{\sort}{\mathrm{sort}}
\newcommand{\lcm}{\mathrm{lcm}}
\newcommand{\Z}{\mathbb{Z}}
\newtheorem{theorem}{Theorem}
\newtheorem{lemma}{Lemma}
\newtheorem{definition}{Definition}
\newtheorem{proposition}{Proposition}
\newtheorem{corollary}{Corollary}
\newtheorem{exercise}{Exercise}
\renewcommand{\emptyset}{\varnothing}
\newcommand{\loesung}[1]{}
\begin{document}

\title{Automata, languages, and grammars}
\author{Cristopher Moore}
\maketitle

\begin{abstract}
These lecture notes are intended as a supplement to Moore and Mertens' \emph{The Nature of Computation}, and are available to anyone who wants to use them.  
Comments are welcome, and please let me know if you use these notes in a course.
\end{abstract}

\setlength{\epigraphwidth}{0.5 \textwidth}
\epigraph{\ldots nor did Pnin, as a teacher, ever presume to approach the lofty halls of modern scientific linguistics \ldots\ 
which perhaps in some fabulous future may become instrumental in evolving esoteric dialects---Basic Basque and so forth---spoken only by certain elaborate machines.}{Vladimir Nabokov, \emph{Pnin}}

In the 1950s and 60s there was a great deal of interest in the power of simple machines, motivated partly by the nascent field of computational complexity and partly by formal models of human (and computer) languages.  These machines and the problems they solve give us simple playgrounds in which to explore issues like nondeterminism, memory, and information.  Moreover, they give us a set of complexity classes that, unlike P and NP, we can understand completely.  

In these lecture notes, we explore the most natural classes of automata, the languages they recognize, and the grammars they correspond to.  While this subfield of computer science lacks the mathematical depth we will encounter later on, it has its own charms, and provides a few surprises---like undecidable problems involving machines with a single stack or counter.

\section{Finite-State Automata}
\label{sec:dfa}

Here is a deterministic finite-state automaton, or DFA for short:

\begin{center}
\includegraphics[width=1.8in]{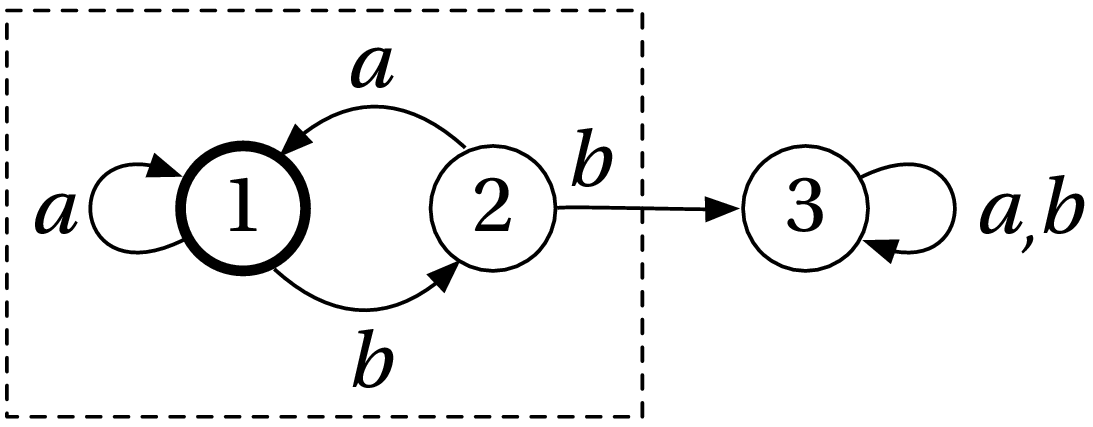}
\end{center}

\noindent
It has three states, $\{1,2,3\}$, and an input alphabet with two symbols, $\{a,b\}$.  It starts in the state $1$ (circled in bold) and reads a string of $a$s and $b$s, making transitions from one state to another along the arrows.  It is \emph{deterministic} because at each step, given its current state and the symbol it reads next, it has one and only one choice of state to move to.

This DFA is a tiny kind of computer.  It's job is to say ``yes'' or ``no'' to input strings---to ``accept'' or ``reject'' them.  It does this by reading the string from left to right, arriving in a certain final state.  If its final state is in the dashed rectangle, in state $1$ or $2$, it accepts the string.  Otherwise, it rejects it.  Thus any DFA answers a simple yes-or-no question---namely, whether the input string is in the set of strings that it accepts.  

The set of yes-or-no questions that can be answered by a DFA is a kind of baby complexity class.  It is far below the class P of problems that can be solved in polynomial time, and indeed the problems it contains are almost trivial.  But it is interesting to look at, partly because, unlike P, we understand exactly what problems this class contains.  To put this differently, while polynomial-time algorithms have enormous richness and variety, making it very hard to say exactly what they can and cannot do, we can say precisely what problems DFAs can solve.  

Formally, a DFA $M$ consists of a set of states $S$, an input alphabet $A$, an initial state $\sinit \in S$, a subset $\Syes \subseteq S$ of accepting states, and a transition function
\[
\delta : S \times A \to S \, ,
\]
that tells the DFA which state to move to next.  In other words, if it's in state $s$ and it reads the symbol $a$, it moves to a new state $\delta(s,a)$.  In our example, $S=\{1,2,3\}$, $A=\{a,b\}$, $\sinit=1$, $\Syes = \{1,2\}$, and the transition function is
\[
\begin{array}{cc}
\delta(1,a)=1 & \delta(1,b)= 2 \\
\delta(2,a)=1 & \delta(2,b)= 3 \\
\delta(3,a)=3 & \delta(3,b)= 3 
\end{array}
\]

Given a finite alphabet $A$, let's denote the set of all finite strings $A^*$.  For instance, if $A=\{a,b\}$ then 
\[
A^* = \{ \eps, a, b, aa, ab, ba, bb, \ldots\ \} \, .
\]
Here $\eps$ denotes the empty string---that is, the string of length zero.  Note that $A^*$ is an infinite set, but it only includes finite strings.  It's handy to generalize the transition function and let $\delta^*(s,w)$ denote the state we end up in if we start in a state $s$ and read a string $w \in A^*$ from left to right.  For instance, 
\[
\delta^*(s,abc) = \delta(\delta(\delta(s,a),b),c) \, . 
\]
We can define $\delta^*$ formally by induction, 
\[
\delta^*(s,w) = \begin{cases}
s & \mbox{if $w=\eps$} \\ 
\delta(\delta^*(s,u),a) & \mbox{if $w=ua$ for $u \in A^*$ and $a \in A$} \, . 
\end{cases}
\]
Here $ua$ denotes the string $u$ followed by the symbol $a$.  In other words, if $w$ is empty, do nothing.  Otherwise, read all but the last symbol of $w$ and then read the last symbol.  More generally, if $w=uv$, i.e., the concatenation of $u$ and $v$, an easy proof by induction gives
\[
\delta^*(s,w) = \delta^*(\delta^*(s,u),v) \, . 
\]
In particular, if the input string is $w$, the final state the DFA ends up in is $\delta^*(\sinit,w)$.  Thus the set of strings, or the \emph{language}, that $M$ accepts is
\[
L(M) = \{ w \in A^* \mid \delta^*(\sinit,w) \in \Syes \} \, . 
\]
If $L=L(M)$, we say that $M$ \emph{recognizes} $L$.  That is, $M$ answers the yes-or-no question if whether $w$ is in $L$, accepting $w$ if and only if $w \in L$.  For instance, our example automaton above recognizes the languages of strings with no two $b$s in a row, 
\[
\nobb = \{ \eps, a, b, aa, ab, ba, aaa, aab, aba, baa, bab, \ldots \} \, . 
\]


Some languages can be recognized by a DFA, and others can't.  Consider the following exercise:
\begin{exercise}
Show that for a given alphabet $A$, the set of possible DFAs is countably infinite (that is, it's the same as the number of natural numbers) while the set of all possible languages is uncountably infinite (that is, it's as large as the number of subsets of the natural numbers).
\end{exercise}

\loesung{
A DFA is finite object, and can be coded with a finite number of bits describing the number of states $|S|$, its transition function $\delta$, and the set of accepting states $\Syes$.  Thus it can be coded as an integer, and the number of DFAs is at most the number of integers.

On the other hand, the number of possible languages is uncountably infinite, even if the alphabet has just a single symbol $a$.  For any subset $T\subseteq \N$, we can define a language $L$ such that the word $a^i$ (i.e., $a$ repeated $i$ times) is in $L$ if and only if $i \in T$.  Thus the number of languages is at least as large as the number of subsets of $\N$.
}

\noindent
By Cantor's diagonalization proof, this shows that there are infinitely (transfinitely!) many more languages than there are DFAs.  In addition, languages like the set of strings of $0$s and $1$s that encode a prime number in binary seem too hard for a DFA to recognize.  Below we will see some ways to prove intuitions like these.  First, let's give the class of languages that can be recognized by a DFA a name.
\begin{definition}
If $A$ is a finite alphabet, a language $L \subseteq A^*$ is \emph{DFA-regular} if there is exists a DFA $M$ such that $L=L(M)$.  That is, $M$ accepts a string $w$ if and only if $w \in L$.  
\end{definition}
This definition lumps together DFAs of all different sizes.  In other words, a language is regular if it can be recognized by a DFA with one state, or two states, or three states, and so on.  However, the number of states has to be constant---it cannot depend on the length of the input word.

Our example above shows that the language $\nobb$ of strings of $a$s and $b$s with no $bb$ is DFA-regular.  What other languages are?
\begin{exercise}
\label{ex:dfa}
Show that the the following languages are DFA-regular.
\begin{enumerate}
\item The set of strings in $\{a,b\}^*$ with an even number of $b$'s.
\item The set of strings in $\{a,b,c\}^*$ where there is no $c$ anywhere to the left of an $a$.
\item The set of strings in $\{0,1\}^*$ that encode, in binary, an integer $w$ that is a multiple of\, $3$.  Interpret the empty string $\eps$ as the number zero.  
\end{enumerate}
In each case, try to find the minimal DFA $M$, i.e., the one with the smallest number of states, such that $L=L(M)$.  Offer some intuition about why you believe your DFA is minimal.
\end{exercise}

\loesung{
Here are DFAs for each of these languages.  The initial state is in bold, and the accepting states are outlined in a dashed box.
\begin{figure}[htb]
\begin{center}
\includegraphics[width=0.7\textwidth]{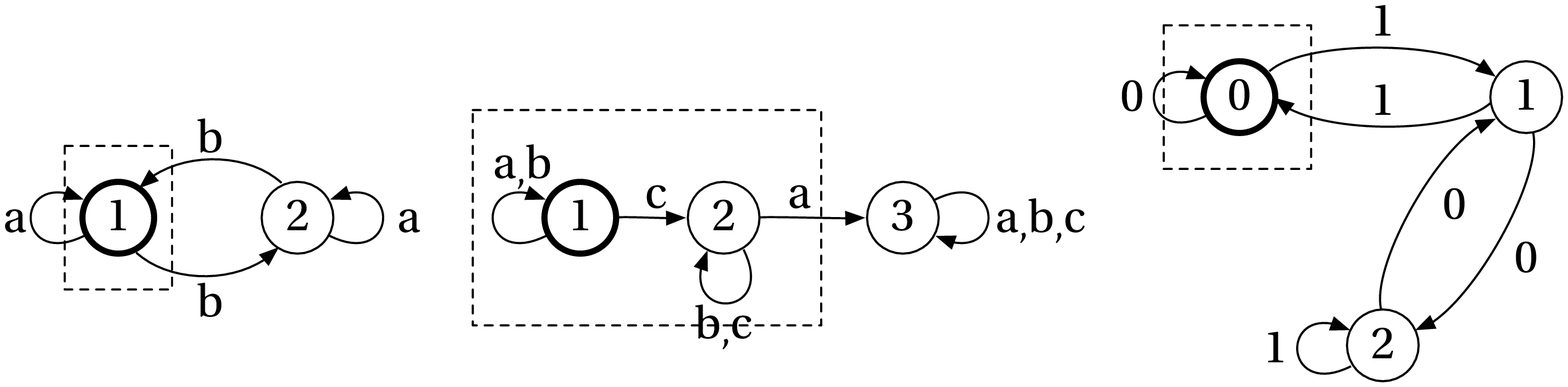}
\end{center}
\label{fig:3dfas}
\end{figure}

\noindent
The first two are self-explanatory.  The third one is more interesting.  If $x$ is the integer read so far, the states $\{0,1,2\}$ correspond to the value of $x \bmod 3$.  Reading a $0$ or $1$ sends $x$ to $2x$ or $2x+1$ respectively.  Do you see how this would generalize if the alphabet were $\{0,\ldots,9\}$, say, and the word had to correspond to a multiple of $7$ written in base $10$?
}

\noindent
As you do the previous exercise, a theme should be emerging.  The question about each of these languages is this: as you read a string $w$ from left to right, what do you need to keep track of at each point in order to be able to tell, at the end, whether $w \in L$ or not?  

\begin{exercise}
Show that any finite language is DFA-regular.
\end{exercise}

\loesung{
If $L$ is finite, then its longest word has some finite length $\ell$.  We can easily write down a DFA with $1+|A|+|A|^2+\cdots+|A|^\ell = O(|A|^\ell)$ states corresponding to the words $w$ of length at most $\ell$: its transition diagram is a tree, where the root corresponds to the empty word, the first level corresponds to the $|A|$ words of length $1$, and so on.  We define each state as accepting or rejecting according to whether or not $w \in L$.  We can add an additional rejecting state to handle the case where the input has length greater than $\ell$.
}

Now that we have some examples, let's prove some general properties of regular languages.

\begin{exercise}[Closure under complement]
Prove that if $L$ is DFA-regular, then $\overline{L}$ is DFA-regular.
\end{exercise}

\loesung{
If $L$ is recognized by a DFA $M$, then replacing $\Syes$ with $\overline{\Syes}$ switches ``yes'' and ``no'' outputs, and yields a DFA $M'$ that recognizes $\overline{L}$.
}

\noindent
This is a simple example of a \emph{closure property}---a property saying that the set of DFA-regular languages is closed under certain operations..  Here is a more interesting one:

\begin{proposition}[Closure under intersection]
\label{prop:intersection}
If $L_1$ and $L_2$ are DFA-regular, then $L_1 \cap L_2$ is DFA-regular.
\end{proposition}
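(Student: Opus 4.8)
The plan is to use the \emph{product construction}: given DFAs recognizing $L_1$ and $L_2$, I build a single DFA $M$ that simulates both of them simultaneously on the same input and accepts only when both would accept. First I would name the two machines $M_1 = (S_1, A, \sinit_1, \Syes_1, \delta_1)$ and $M_2 = (S_2, A, \sinit_2, \Syes_2, \delta_2)$, recognizing $L_1$ and $L_2$ respectively; note that they read strings over the same alphabet $A$. I then define $M$ to have state set $S = S_1 \times S_2$, initial state $(\sinit_1, \sinit_2)$, transition function
\[
\delta\big((s_1, s_2), a\big) = \big(\delta_1(s_1, a), \delta_2(s_2, a)\big) \, ,
\]
and accepting set $\Syes = \Syes_1 \times \Syes_2$, so that a pair of states is accepting exactly when \emph{both} of its coordinates are.

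The heart of the argument is to show that running $M$ on a word $w$ advances the two simulations in lockstep, i.e.\ that
\[
\delta^*\big((s_1, s_2), w\big) = \big(\delta_1^*(s_1, w), \delta_2^*(s_2, w)\big)
\]
for every $w \in A^*$. I would prove this by induction on the length of $w$, using the recursive definition of $\delta^*$ given earlier in the excerpt. The base case $w = \eps$ is immediate, since both sides equal $(s_1, s_2)$. For $w = ua$ with $u \in A^*$ and $a \in A$, the inductive hypothesis applied to $u$ gives $\delta^*((s_1,s_2),u) = (\delta_1^*(s_1,u), \delta_2^*(s_2,u))$, and one further application of the transition function on pairs, together with the recursion $\delta^*(s,ua) = \delta(\delta^*(s,u),a)$ for each machine, closes the induction.

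With this identity in hand, the conclusion is a short chain of equivalences. Setting $s_1 = \sinit_1$ and $s_2 = \sinit_2$, a word $w$ is accepted by $M$ iff $\delta^*((\sinit_1, \sinit_2), w) \in \Syes_1 \times \Syes_2$, which by the lockstep identity holds iff $\delta_1^*(\sinit_1, w) \in \Syes_1$ and $\delta_2^*(\sinit_2, w) \in \Syes_2$ simultaneously---that is, iff $w \in L_1$ and $w \in L_2$, i.e.\ $w \in L_1 \cap L_2$. Hence $L(M) = L_1 \cap L_2$, and since $M$ has exactly $|S_1| \cdot |S_2|$ states it is a genuine finite-state machine, so $L_1 \cap L_2$ is DFA-regular.

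I expect no serious obstacle here; the only step requiring any care is the induction establishing the lockstep identity, and even that is routine given the formal definition of $\delta^*$. The one bookkeeping point I would watch is the tacit assumption that $M_1$ and $M_2$ share the same alphabet $A$: if they did not, I would first extend both machines to the common alphabet $A_1 \cup A_2$, adding a dead rejecting state to each to absorb any symbol it did not previously read, before forming the product.
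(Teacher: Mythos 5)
Your proposal is correct and is essentially identical to the paper's proof: the same product construction with $S = S_1 \times S_2$, componentwise transitions, accepting set $\Syes_1 \times \Syes_2$, and the same induction on $|w|$ establishing $\delta^*(\sinit,w) = \bigl(\delta^*_1(\sinit_1,w), \delta^*_2(\sinit_2,w)\bigr)$. Your remark about reconciling differing alphabets is a sensible bit of extra care that the paper tacitly elides.
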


\begin{proof}
The intuition behind the proof is to run both automata at the same time, and accept if they both accept.  Let $M_1$ and $M_2$ denote the automata that recognize $L_1$ and $L_2$ respectively.  They have sets of states $S_1$ and $S_2$, initial states $\sinit_1$ and $\sinit_2$, and so on.  Define a new automaton $M$ as follows:
\[
S = S_1 \times S_2 , \quad \sinit = (\sinit_1, \sinit_2) , \quad \Syes = \Syes_1 \times \Syes_2 , \quad
\delta\big( (s_1,s_2), a \big) = \big( \delta_1(s_1,a) , \delta_2(s_2,a) \big) \, . 
\]
Thus $M$ runs both two automata in parallel, updating both of them at once, and accepts $w$ if they both end in an accepting state.  To complete the proof in gory detail, by induction on the length of $w$ we have
\[
\delta^*(\sinit,w) = \big( \delta^*_1(\sinit_1,w), \delta^*_2(\sinit_2,w) \big) \, ,
\]
Since $\delta^*(\sinit,w) \in \Syes$ if and only if $\delta^*_1(\sinit_1,w) \in \Syes_1$ and $\delta^*_2(\sinit_2,w) \in \Syes_2$, in which case $w \in L_1$ and $w \in L_2$, $M$ recognizes $L_1 \cap L_2$.
\end{proof}

Note that this proof creates a combined automaton with $|S| = |S_1| |S_2|$ states.  Do you we think we can do better?  Certainly we can in some cases, such as if $L_2$ is the empty set.  But do you think we can do better in general?  Or do you think there is an infinite family of pairs of automata $(M_1,M_2)$ of increasing size such that, for every pair in this family, the smallest automaton that recognizes $L(M_1) \cap L(M_2)$ has $|S_1| |S_2|$ states?  We will resolve this question later on---but for now, see if you can come up with such a family, and an intuition for why $|S_1| |S_2|$ states are necessary.

Now that we have closure under complement and intersection, de Morgan's law 
\[
L_1 \cup L_2 = \overline{\overline{L_1} \cap \overline{L_2}} 
\]
tells us that the union of two DFA-regular languages is DFA-regular.  Thus we have closure under union as well as intersection and complement.  By induction, any Boolean combination of DFA-regular languages is DFA-regular.

\begin{exercise}
Pretend you don't know de Morgan's law, and give a direct proof in the style of Proposition~\ref{prop:intersection} that the set of DFA-regular languages is closed under union.
\end{exercise}

\loesung{Let $M_1$ and $M_2$ be DFAs that recognize $L_1$ and $L_2$ respectively.  We will construct a DFA $M$ that recognizes $L_1 \cup L_2$.  As in Proposition~\ref{prop:intersection}, $M$ runs $M_1$ and $M_2$ in parallel, with state space $S = S_1 \times S_2$, initial state $\sinit = (\sinit_1, \sinit_2)$, and transitions $\delta\big( (s_1,s_2), a \big) = \big( \delta_1(s_1,a) , \delta_2(s_2,a) \big)$.  The only difference is that now we define its accepting states as
\[
S = \left\{ (s_1,s_2) \mid \mbox{$s_1 \in \Syes_1$ or $s_2 \in \Syes_2$} \right\} = \left( \Syes_1 \times S_2 \right) \cup \left( S_1 \times \Syes_2 \right) \, . 
\]
Then $M$ accepts if either $M_1$ or $M_2$ accept.
}

\begin{exercise}
Consider two DFAs, $M_1$ and $M_2$, with $n_1$ and $n_2$ states respectively.  Show that if $L(M_1) \ne L(M_2)$, there is at least one word of length less than $n_1 n_2$ that $M_1$ accepts and $M_2$ rejects or vice versa.  Contrapositively, if all words of length less than $n_1 n_2$ are given the same response by both DFAs, they recognize the same language.
\end{exercise}

\loesung{
Once again, we combine $M_1$ and $M_2$ into a larger DFA $M$, but this time one that recognizes the words on which $M_1$ and $M_2$ disagree.  As before, set $S = S_1 \times S_2$, $\sinit = (\sinit_1, \sinit_2)$, and $\delta\big( (s_1,s_2), a \big) = \big( \delta_1(s_1,a) , \delta_2(s_2,a) \big)$.  Now define the accepting states as
\begin{align*}
S 
&= \left\{ (s_1,s_2) \mid \mbox{($s_1 \in \Syes_1$ and $s_2 \notin \Syes_2$) or ($s_1 \notin \Syes_1$ and $s_2 \in \Syes_2$)} \right\} \\
&= \left( \Syes_1 \times \overline{\Syes_2} \right) \cup \left( \overline{\Syes_1} \times \Syes_2 \right) \, . 
\end{align*}
This DFA has $n_1 n_2$ states, and accepts $(L_1 \cap \overline{L_2}) \cup (\overline{L_1} \cap L_2)$, the set of words that are in one language but not the other.  Finally, note that if a DFA with $n$ states accepts any words, then it must accept at least one word of length less than $n$, since there is a path from $\sinit$ to some $s \in \Syes$ consisting of $n-1$ or fewer steps.
}

These closure properties correspond to ways to modify a DFA, or combine two DFAs to create a new one.  We can switch the accepting and rejecting states, or combine two DFAs so that they run in parallel.  What else can we do?  Another way to combine languages is concatenation.  Given two languages $L_1$ and $L_2$, define the language
\[
L_1 L_2 = \{ w_1 w_2 \mid w_1 \in L_1, w_2 \in L_2 \} \, . 
\]
In other words, strings in $L_1 L_2$ consist of a string in $L_1$ followed by a string in $L_2$.  Is the set of DFA-regular languages closed under concatenation?

In order to recognize $L_1 L_2$, we would like to run $M_1$ until it accepts $w_1$, and then start running $M_2$ on $w_2$.  But there's a problem---we might not know where $w_1$ ends and $w_2$ begins.  For instance, let $L_1$ be $\nobb$ from our example above, and let $L_2$ be the set of strings from the first part of Exercise~\ref{ex:dfa}, where there are an even number of $b$s.  Then the word $babaabbab$ is in $L_1 L_2$.  But is it $babaab+bab$, or $ba+baabbab$, or $b+abaabbab$?  In each case we could jump from an accepting state of $M_1$ to the initial state of $M_2$.  But if we make this jump in the wrong place, we could end up rejecting instead of accepting: for instance, if we try $babaa+bbab$.  

If only we were allowed to guess where to jump, and somehow always be sure that if we'll jump at the right place, if there is one\ldots

\section{Nondeterministic Finite-State Automata}
\label{sec:nfa}

Being deterministic, a DFA has no choice about what state to move to next.  In the diagram of the automaton for $\nobb$ above, each state has exactly one arrow pointing out of it labeled with each symbol in the alphabet.  What happens if we give the automaton a choice, giving some states multiple arrows labeled with the same symbol, and letting it choose which way to go?  

This corresponds to letting the transition function $\delta(s,a)$ be multi-valued, so that it returns a set of states rather than a single state.  Formally, it is a function into the power set $\wp(S)$, i.e., the set of all subsets of $S$:
\[
\delta: S \times A \to \wp(S) \, . 
\]
We call such a thing a \emph{nondeterministic finite-state automaton}, or NFA for short.

Like a DFA, an NFA starts in some initial state $\sinit$ and makes a series of transitions as it reads a string $w$.  But now the set of possible computations branches out, letting the automaton follow many possible paths.  Some of these end in the accepting state $\Syes$, and others don't.  Under what circumstances should we say that $M$ accepts $w$?  How should we define the language $L(M)$ that it recognizes?  There are several ways we might do this, but the one we use is this: $M$ accepts $w$ if and only if \emph{there exists a computation path} ending in an accepting state.  Conversely, $M$ rejects $w$ if and only if \emph{all computation paths} end in a rejecting state.

Note that this notion of ``nondeterminism'' has nothing to do with probability.  It could be that only one out of exponentially many possible computation paths accepts, so that if the NFA flips a coin each time it has a choice, the probability that it finds this path is exponentially small.  We judge acceptance not on the probability of an accepting path, but simply on its existence.  This definition may seem artificial, but it is analogous to the definition of NP (see Chapter 4) where the answer to the input is ``yes'' if there exists a solution, or ``witness,'' that a deterministic polynomial-time algorithm can check.  

Perhaps it's time for an example.  Let the input alphabet be $A=\{a,b\}$, and define an NFA as follows:
\begin{center}
\includegraphics[width=2.3in]{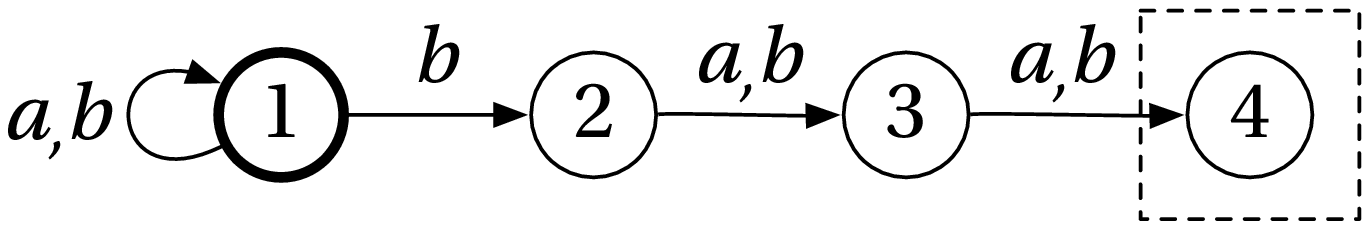}
\end{center}
Let's call this automaton $M_3$.  It starts in state $1$, and can stay in state $1$ as long as it likes.  But when it reads the symbol $b$, it can move to state $2$ if it prefers.  After that, it moves inexorably to state $3$ and state $4$ regardless of the input symbol, accepting only if it ends in state $4$.  There are no allowed transitions from state $4$.  In other words, $\delta(4,a)=\delta(4,b)=\emptyset$.

We call this automaton $M_3$ because it accepts the set of strings whose 3rd-to-last symbol is a $b$.  But it has to get its timing right, and move from state $1$ to state $2$ when it sees that $b$.  Otherwise, it misses the boat.  How hard do you think it is to recognize this set deterministically?  Take a moment to design a DFA that recognizes $L(M_3)$.  How many states must it have?  As in Exercise~\ref{ex:dfa}, how much information does it need to keep track of as it reads $w$ left-to-right, so that whenever $w$ steps, it ``knows'' whether the 3rd-to-last symbol was a $b$ or not?

As we did for DFAs, let's define the class of languages that can be recognized by some NFA:

\begin{definition}
A language $L \subseteq A^*$ is \emph{NFA-regular} if there is exists a NFA $M$ such that $L=L(M)$.  
\end{definition}

\noindent
A DFA is a special case of an NFA, where $\delta(s,a)$ always contains a single state.  Thus any language which is DFA-regular is automatically also NFA-regular.  

On the other hand, with their miraculous ability to guess the right path, it seems as if NFAs could be much more powerful than DFAs.  Indeed, where P and NP are concerned, we believe that nondeterminism makes an enormous difference.  But down here in the world of finite-state automata, it doesn't, as the following theorem shows.

\begin{theorem}
\label{thm:nfa-dfa}
Any NFA can be simulated by a DFA that accepts the same language.  Therefore, a language $L$ is NFA-regular if and only if it is DFA-regular.
\end{theorem}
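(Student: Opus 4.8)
The plan is to prove the ``if and only if'' by handling the two directions separately. One direction is free: a DFA is literally an NFA in which every set $\delta(s,a)$ is a singleton, so every DFA-regular language is already NFA-regular. All the content lies in the converse---simulating an arbitrary NFA by a DFA---which is the classic \emph{subset construction}. The guiding observation is that, even though the NFA's computation branches into many paths, the \emph{set} of states it could possibly be in after reading a given prefix is a deterministic function of that prefix. So I would build a DFA whose single state encodes exactly this set of possibilities, collapsing the NFA's cloud of parallel computations into one deterministic ``super-state.''

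Concretely, given an NFA $M$ with states $S$, initial state $\sinit$, accepting set $\Syes$, and transition function $\delta : S \times A \to \wp(S)$, I would define a DFA $M'$ with state set $\wp(S)$, initial state $\{\sinit\}$, and transition function
\[
\delta'(T,a) = \bigcup_{s \in T} \delta(s,a) \, ,
\]
which gathers every state the NFA could reach in one step from somewhere in $T$. I would declare a subset $T$ accepting precisely when the NFA could be in an accepting state, that is, when $T \cap \Syes \neq \emptyset$. Note that $M'$ has up to $2^{|S|}$ states; this exponential blow-up is worth flagging, but since the number of states is still finite and independent of the input length, $M'$ is a perfectly legitimate DFA.

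The heart of the argument is to show that $M'$ genuinely tracks the reachable set. Writing $\delta^*(s,w) \subseteq S$ for the set of NFA states reachable from $s$ by reading $w$, defined by
\[
\delta^*(s,w) = \begin{cases}
\{s\} & \mbox{if } w = \eps \\
\bigcup_{s' \in \delta^*(s,u)} \delta(s',a) & \mbox{if } w = ua \, ,
\end{cases}
\]
the NFA accepts $w$ exactly when $\delta^*(\sinit,w) \cap \Syes \neq \emptyset$. I would then prove by induction on the length of $w$ the identity
\[
(\delta')^*(\{\sinit\},w) = \delta^*(\sinit,w) \, .
\]
The base case $w=\eps$ holds since both sides equal $\{\sinit\}$, and the inductive step $w=ua$ follows by applying the hypothesis to $u$ and unfolding both definitions, each of which takes the union of $\delta(\cdot,a)$ over the states reachable by $u$. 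With the identity in hand, $M'$ accepts $w$ iff its final subset meets $\Syes$, iff $\delta^*(\sinit,w) \cap \Syes \neq \emptyset$, iff $M$ accepts $w$, so $L(M') = L(M)$ and the theorem follows. I do not anticipate a real obstacle here: the construction is the entire idea and the induction is routine bookkeeping. The only subtlety deserving comment is the exponential number of states, and it is worth remarking---perhaps deferring the proof---that this blow-up cannot always be avoided.
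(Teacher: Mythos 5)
Your proposal is correct and is essentially the paper's own proof: the same subset construction with states $\wp(S)$, initial state $\{\sinit\}$, transitions $\delta'(T,a)=\bigcup_{s\in T}\delta(s,a)$, and accepting sets those $T$ with $T\cap \Syes\neq\emptyset$. You merely spell out the inductive verification that the paper leaves implicit, and you correctly note the trivial converse direction, which the paper handles in the surrounding text.
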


\begin{proof}
The idea is to keep track of all possible states that an NFA can be in at each step.  While the NFA's transitions are nondeterministic, this set changes in a deterministic way.  Namely, after reading a symbol $a$, a state $s$ is possible if $s \in \delta(t,a)$ for some state $t$ that was possible on the previous step.

This lets us simulate an NFA $M$ with a DFA $M_\DFA$ as follows.  If $M$'s set of states is $S$, then $M_\DFA$'s set of states $S_\DFA$ is the power set $\wp(S) = \{ T \subseteq S \}$, its initial state $\sinit_\DFA$ is the set $\{\sinit\}$, and its transition function is 
\[
\delta_\DFA(T,a) = \bigcup_{t \in T} \delta_M(t,a) \, . 
\]
Then for any string $w$, $\delta^*_\DFA(\sinit_\DFA,w)$ is the set of states that $M$ could be in after reading $w$.  Finally, $M$ accepts if it could end in at least one accepting state, so we define the accepting set of $M_\DFA$ as 
\[
\Syes_\DFA = \{ T \cap \Syes \ne \emptyset \} \, . 
\]
Then $L(M_\DFA)=L(M)$.
\end{proof}

While this theorem shows that any NFA can be simulated by a DFA, the DFA is much larger.  If $M$ has $|S|=n$ states, then $M_\DFA$ has $|\wp(S)|=2^n$ states.  Since $2^n$ is finite for any finite $n$, and since the definition of DFA-regular lets $M_\DFA$ have any finite size, this shouldn't necessarily bother us---the size of $M_\DFA$ is ``just a constant.''  But as for our question about $L_1 \cap L_2$ above, do you think this is the best we can do?

Now that we know that DFA-regular and NFA-regular languages are the same, let's use the word ``regular'' for both of them.  Having two definitions for the same thing is helpful.  If we want to prove something about regular languages, we are free to use whichever definition---that is, whichever type of automaton---makes that proof easier.

For instance, let's revisit the fact that the union of two regular languages is regular.  In Section~\ref{sec:dfa}, we proved this by creating a DFA that runs two DFAs $M_1$ and $M_2$ in parallel.  But an NFA can do something like this:
\begin{center}
\includegraphics[width=0.9in]{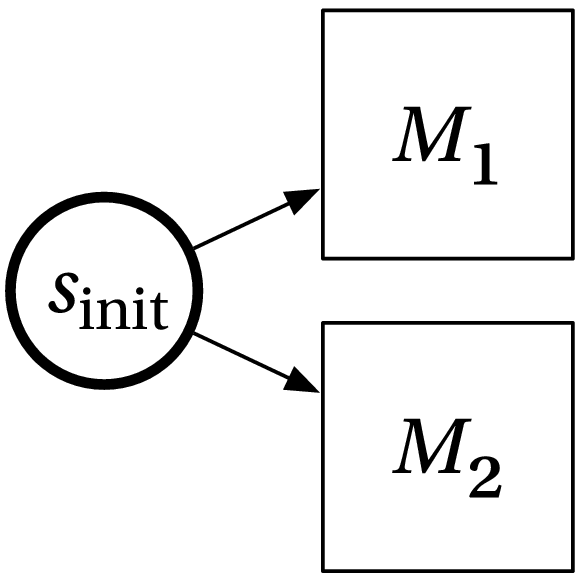}
\end{center}
This NFA guesses which of $M_1$ or $M_2$ it should run, rather than running both of them at once, and recognizes $L_1 \cup L_2$ with about $|S_1| + |S_2|$ states instead of $|S_1| |S_2|$ states.  To make sense, the edges out of $\sinit$ have to be \emph{$\eps$-transitions}.  That is, the NFA has to be able to jump from $\sinit$ to $\sinit_1$ or $\sinit_2$ without reading any symbols at all:
\[
\delta^*(\sinit,\eps) = \{ \sinit_1, \sinit_2 \} \, . 
\]
Allowing this kind of transition in our diagram doesn't change the definition of NFA at all (exercise for the reader).  This also makes it easy to prove closure under concatenation, which we didn't see how to do with DFAs:

\begin{proposition}
If $L_1$ and $L_2$ are regular, then $L_1 L_2$ is regular.
\end{proposition}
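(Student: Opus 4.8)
The plan is to exploit our newfound freedom to use NFAs with $\eps$-transitions, since, as the text just observed, the DFA picture stumbled precisely on not knowing where $w_1$ ends and $w_2$ begins; nondeterminism lets us \emph{guess} the split point, and $\eps$-transitions let us jump into the second machine for free. Let $M_1$ and $M_2$ be NFAs recognizing $L_1$ and $L_2$, with (after relabeling so that their state sets are disjoint) states $S_1, S_2$, initial states $\sinit_1, \sinit_2$, and accepting sets $\Syes_1, \Syes_2$. First I would build a new NFA $M$ whose states are $S_1 \cup S_2$, whose initial state is $\sinit_1$, and whose accepting set is $\Syes_2$. Its transitions are all the transitions of $M_1$ together with all the transitions of $M_2$, plus one new bundle of $\eps$-transitions: from every accepting state $s \in \Syes_1$, an $\eps$-transition to $\sinit_2$. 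Intuitively, $M$ runs $M_1$, and whenever $M_1$ would accept, $M$ may instead leap for free into $M_2$ and finish reading the rest of the string there.

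Next I would verify $L(M) = L_1 L_2$ by proving both inclusions. For $L_1 L_2 \subseteq L(M)$: given $w = w_1 w_2$ with $w_1 \in L_1$ and $w_2 \in L_2$, there is a computation of $M_1$ reading $w_1$ and ending in some $s \in \Syes_1$; following it by the $\eps$-transition $s \to \sinit_2$ and then an accepting computation of $M_2$ on $w_2$ yields a single accepting computation of $M$ on $w$. The case $w_1 = \eps$ is handled automatically, since then $\sinit_1 \in \Syes_1$ and the $\eps$-jump is available immediately. For the reverse inclusion $L(M) \subseteq L_1 L_2$: consider any accepting computation of $M$ on $w$. It begins at $\sinit_1 \in S_1$ and ends in $\Syes_2 \subseteq S_2$; the only edges leading from $S_1$ into $S_2$ are the new $\eps$-transitions, and no edge leads from $S_2$ back into $S_1$. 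Hence the path crosses from $S_1$ to $S_2$ exactly once, at some accepting state of $M_1$. Letting $w_1$ be the prefix consumed before the crossing and $w_2$ the suffix consumed after it, the portion in $S_1$ witnesses $w_1 \in L_1$ and the portion in $S_2$ witnesses $w_2 \in L_2$, so $w = w_1 w_2 \in L_1 L_2$.

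The one genuinely delicate point---and the step I expect to demand the most care---is this last ``crosses exactly once'' argument: pinning down that an accepting path decomposes cleanly into an $M_1$-phase and an $M_2$-phase, with the split landing at an accepting state of $M_1$. Making this rigorous amounts to an induction on the structure of the computation path, leaning on the disjointness of $S_1$ and $S_2$ and the absence of return edges; it is the formal counterpart of the earlier informal promise that we could ``somehow always be sure that we'll jump at the right place, if there is one.'' I would also flag two housekeeping items: first, relabeling states to guarantee $S_1 \cap S_2 = \emptyset$, since otherwise the union would conflate distinct states; and second, confirming that allowing $\eps$-transitions does not enlarge the class of recognizable languages. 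The latter is exactly the preceding ``exercise for the reader,'' so I am entitled to assume an NFA with $\eps$-transitions can be converted to an ordinary NFA, and hence by Theorem~\ref{thm:nfa-dfa} to a DFA. With those in hand, the machine $M$ certifies that $L_1 L_2$ is regular.
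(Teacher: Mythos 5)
Your proposal is correct and is essentially the paper's own proof: the identical construction (disjoint union $S_1 \cup S_2$, initial state $\sinit_1$, accepting set $\Syes_2$, and $\eps$-transitions from each state of $\Syes_1$ to $\sinit_2$), with the paper simply asserting ``Then $M$ recognizes $L_1 L_2$'' where you spell out the two inclusions. Your added verification, in particular the ``crosses exactly once'' argument and the housekeeping about disjointness and eliminating $\eps$-transitions, is sound and fills in exactly what the paper leaves to the reader.
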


\begin{proof}
Start with NFAs $M_1$ and $M_2$ that recognize $L_1$ and $L_2$ respectively.  We assume that $S_1$ and $S_2$ are disjoint.  Define a new NFA $M$ with $S=S_1 \cup S_2$, $\sinit = \sinit_1$, $\Syes = \Syes_2$, and an $\eps$-transition from each $s_1 \in \Syes_1$ to $\sinit_2$.  Then $M$ recognizes $L_1 L_2$.
\end{proof}

Another important operator on languages is the \emph{Kleene star}.  Given a language $L$, we define $L^*$ as the concatenation of $t$ strings from $L$ for any integer $t \ge 0$:
\begin{align*}
L^*
&= {\eps} \cup L \cup LL \cup LLL \cup \cdots \\
&= \left\{ w_1 w_2 \cdots w_t \mid \mbox{$t \ge 0, w_i \in L$ for all $1 \le i \le t$} \right\} \, . 
\end{align*}
This includes our earlier notation $A^*$ for the set of all finite sequences of symbols in $A$.  Note that $t=0$ is allowed, so $L^*$ includes the empty word $\eps$.  Note also that $L^t$ doesn't mean repeating the same string $t$ times---the $w_i$ are allowed to be different.  The following exercise shows that the class of regular languages is closed under $*$:

\begin{exercise}
Show that if $L$ is regular then $L^*$ is regular.  Why does it not suffice to use the fact that the regular languages are closed under concatenation and union?
\end{exercise}

\loesung{Given an NFA $M$ for $L$, create an NFA $M'$ for $L^*$ simply by adding $\eps$-transitions from each accepting $s \in \Syes$ state back to the initial state $\sinit$.  This lets us concatenate accepting paths, so if $w_1, \ldots, w_t \in L$ then there is an accepting path for $w_1 w_2 \cdots w_t \in L^*$.  Conversely, any accepting path $p$ in $M'$ consists of a series of accepting paths in $M$, namely the segments of $p$ that begin at $\sinit$ and end in $\Syes$.  In order to make sure that $\eps$ is accepted, we also include an $\eps$-transition from $\sinit$ to an additional accepting state.

If we try to do this using closure under concatenation and union, we can certainly make an NFA that recognizes $L \cup LL$, another that recognizes $L \cup LL \cup LLL$, and so on.  But these automata get bigger and bigger, and we want a single NFA that recognizes words in $L^t$ for any $t \ge 0$.  Hence we need some kind of loop in the NFA's state space.
}

Here is another fact that is easier to prove using NFAs than DFAs:

\begin{exercise}
\label{ex:reverse}
Given a string $w$, let $w^R$ denote $w$ written in reverse.  Given a language $L$, let $L^R = \{ w^R \mid w \in L \}$.  Prove that $L$ is regular if and only if $L^R$ is regular.  Why is this harder to prove with DFAs?
\end{exercise}

\loesung{
Given an NFA $M$ for $L$, our instinct is to create an NFA $M'$ for $L^R$ by reversing all the arrows, and making the new $\Syes$ the old $\sinit$ and vice versa.  This almost works, but since NFAs are defined with one initial state but multiple accepting states, we create a new initial state ${\sinit}'$ and add $\eps$-transitions from it to each $s \in \Syes$.  Finally we set ${\Syes}' = \{\sinit\}$.

If we try to do this with a DFA, we will usually create some nondeterminism, giving an NFA instead.  If two arrows in a DFA arrive at the same state when reading the same symbol, reversing them creates a branch point where we can go in two different directions.
}

On the other hand, if you knew about NFAs but not about DFAs, it would be tricky to prove that the complement of a regular language is regular.  The definition of nondeterministic acceptance is asymmetric: a string $w$ is in $\overline{L(M)}$ if \emph{every} computation path leads to a state in $\overline{\Syes}$.  Logically speaking, the negation of a ``there exists'' statement is a ``for all'' statement, creating a different kind of nondeterminism.  Let's show that defining acceptance this way again keeps the class of regular languages the same:

\begin{exercise}
A \emph{for-all NFA} is one such that $L(M)$ is the set of strings where \emph{every} computation path ends in an accepting state.  Show how to simulate an for-all NFA with a DFA, and thus prove that a language is recognized by some for-all NFA if and only if it is regular.
\end{exercise}

\loesung{
Simulate an NFA $M$ with a DFA $M_\DFA$ as in Theorem~\ref{thm:nfa-dfa}, but now define the set of accepting states as
\[
\Syes_\DFA = \{ T \subseteq \Syes \} \, . 
\]
This forces every state that $M$ could be in at the end of computation to be an accepting state.
}

If that was too easy, try this one:

\begin{exercise} 
A \emph{parity finite-state automaton}, or PFA for short, is like an NFA except that it accepts a string $w$ if and only if the number of accepting paths induced by reading $w$ is odd.  Show how to simulate a PFA with a DFA, and thus prove that a language is recognized by a PFA if and only if it is regular.  Hint: this is a little trickier than our previous simulations, but the number of states of the DFA is the same.
\end{exercise}

\loesung{
Given a PFA with states $S$, we define a DFA with $S_\DFA = \{0,1\}^S$, the set of functions $f: S \to \{0,1\}$.  At each step, $f(s)$ will be the number of paths, mod $2$, from $\sinit$ to $s$.  The transition function updates this function by summing over all ways to extend the path from the previous state,
\[
\delta_\DFA(f,a) = f'
\quad \text{where} \quad
f'(t) = \sum_{s: t \in \delta(s,a)} f(s) \bmod 2 \, . 
\]
The set of accepting states is 
\[
\Syes_\DFA = \left\{ f \; \left\vert \; \sum_{s \in \Syes} f(s) \bmod 2 = 1 \right. \right\} \, . 
\]
}

Here is an interesting closure property:

\begin{exercise}
Given finite words $u$ and $v$, say that a word $w$ is an \emph{interweave} of $u$ and $v$ if I can get $w$ by peeling off symbols of $u$ and $v$, taking the next symbol of $u$ or the next symbol of $v$ at each step, until both are empty.  (Note that $w$ must have length $|w|=|u|+|v|$.)  For instance, if $u={\tt cat}$ and $v={\tt tapir}$, then one interleave of $u$ and $v$ is $w={\tt ctaapitr}$.  Note that, in this case, we don't know which ${\tt a}$ in $w$ came from $u$ and which came from $v$.

Now given two languages $L_1$ and $L_2$, let $L_1 \wr L_2$ be the set of all interweaves $w$ of $u$ and $v$, for all $u \in L_1$ and $v \in L_2$.  Prove that if $L_1$ and $L_2$ are regular, then so is $L_1 \wr L_2$.
\end{exercise}

\loesung{
Let $M_1$ and $M_2$ be NFAs that recognize $L_1$ and $L_2$.  We define an NFA that recognizes $L_1 \wr L_2$.  It is similar to the DFA we defined in Proposition~\ref{prop:intersection} for $L_1 \cap L_2$.  It has states $S = S_1 \times S_2$, initial state $\sinit = (\sinit_1, \sinit_2)$, and accepting states $\Syes = \Syes_1 \times \Syes_2$.  But now, rather than making a transition in both machines at once, we (nondeterministically) make a transition in one machine or the other:
\begin{align*}
\delta\big( (s_1,s_2), a \big) 
&= \left\{ (s'_1,s'_2) \mid 
\left(s'_1 \in \delta_1(s_1,a) \mbox{ and } s'_2=s_2 \right)
\mbox{ or } 
\left(s'_1=s_1 \mbox{ and } s'_2 \in \delta_2(s_2,a) \right)
\right\} \\
&= \left( \delta_1(s_1,a) \times \{s_2\} \right) \;\bigcup\; \left( \{s_1\} \times \delta_2(s_2,a) \right)\, . 
\end{align*}
This automaton treats each symbol as belonging to a word in $L_1$ or a word in $L_2$, moving either $M_1$ or $M_2$ forward at each step, and accepts if both accept.
}

Finally, the following exercise is a classic.  If you are familiar with modern culture, consider the plight of River Song and the Doctor.

\begin{exercise}
Given a language $L$, let $L_{1/2}$ denote the set of words that can appear as first halves of words in $L$:
\[
L_{1/2} = \left\{ x \mid \exists y : |x|=|y| \mbox{ and } xy \in L \right\} \, ,
\]
where $|w|$ denotes the length of a word $w$.  Prove that if $L$ is regular, then $L_{1/2}$ is regular.  Generalize this to $L_{1/3}$, the set of words that can appear as middle thirds of words in $L$:
\[
L_{1/3} = \left\{ y \mid \exists x, z : |x|=|y|=|z| \mbox{ and } xyz \in L \right\} \, . 
\]
\end{exercise}

\loesung{
(Sketch) If $L$ is regular, it is recognized by some DFA $M$ with transition function $\delta$, initial state $\sinit$, and accepting states $\Syes$.  We will define an NFA $M_{1/2}$ that recognizes $L_{1/2}$.  It keeps track of a pair of states $(s,t)$ of $M$: initially, $s=\sinit$ and $t \in \Syes$.  At each step, when it reads a symbol $a$, $M_{1/2}$ updates $s$ to $\delta(s,a)$ (moving forward in time), and it updates $t$ to any $t'$ such that $t=\delta(t',a')$ for some $a'$ (moving backwards in time).  Then $M_{1/2}$ accepts if these forwards and backwards paths meet in the middle, i.e., if $s=t$.

To recognized $L_{1/3}$, take one step forward and two steps back.
}

\section{Equivalent States and Minimal DFAs}
\label{sec:minimal}

The key to bounding the power of a DFA is to think about what kind of information it can gather, and retain, about the input string---specifically, how much it needs to remember about the part of the string it has seen so far.  Most importantly, a DFA has no memory beyond its current state.  It has no additional data structure in which to store or remember information, nor is it allowed to return to earlier symbols in the string.  

In this section, we will formalize this idea, and use it to derive lower bounds on the number of states a DFA needs to recognize a given language.  In particular, we will show that some of the constructions of DFAs in the previous sections, for the intersection of two regular languages or to deterministically simulate an NFA, are optimal.  

\begin{definition}
Given a language $L \subseteq A^*$, we say that a pair of strings $u, v \in A^*$ are \emph{equivalent with respect to $L$}, and write $u \sim_L v$, if for all $w \in A^*$ we have
\[
uw \in L \quad \text{if and only if} \quad vw \in L \, . 
\]
\end{definition}

\noindent
Note that this definition doesn't say that $u$ and $v$ are in $L$, although the case $w=\eps$ shows that $u \in L$ if and only if $v \in L$.  It says that $u$ and $v$ can end up in $L$ by being concatenated with the same set of strings $w$.  If you think of $u$ and $v$ as prefixes, forming the first part of a string, then they can be followed by the same set of suffixes $w$.

It's easy to see that $\sim_L$ is an equivalence relation: that is, it is reflexive, transitive, and symmetric.  This means that for each string $u$ we can consider its \emph{equivalence class}, the set of strings equivalent to it.  We denote this as
\[
[u] = \{ v \in A^* \mid u \sim_L v \} \, ,
\]
in which case $[u]=[v]$ if and only if $u \sim_L v$.  Thus $\sim_L$ carves the set of all strings $A^*$ up into equivalence classes.  

As we read a string from left to right, we can lump equivalent strings together in our memory.  We just have to remember the equivalence class of what we have seen so far, since every string in that class behaves the same way when we add more symbols to it.  For instance, $\nobb$ has three equivalence classes:
\begin{enumerate}
\item $[\eps]=[a]=[ba]=[abaaba]$, the set of strings with no $bb$ that do not end in $b$.
\item $[b]=[ab]=[baabab]$, the set of strings with no $bb$ that end in $b$.
\item $[bb]=[aababbaba]$, the set of strings with $bb$.
\end{enumerate} 
Do you see how these correspond to the three states of the DFA?

\begin{figure}
\begin{center}
\includegraphics[width=1.9in]{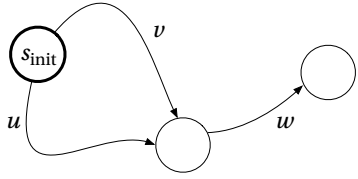}
\end{center}
\caption{If reading $u$ or $v$ puts $M$ in the same state, then so will reading $uw$ or $vw$, causing $M$ to accept both or reject both.}
\label{fig:uvw}
\end{figure}

On the other hand, if two strings $u, v$ are not equivalent, we had better be able to distinguish them in our memory.  Given a DFA $M$, let's define another equivalence relation, 
\[
u \sim_M v \;\Leftrightarrow\; \delta^*(\sinit,u) = \delta^*(\sinit,v)
\]
where $\sinit$ and $\delta$ are $M$'s initial state and transition function.  Two strings $u, v$ are equivalent with respect to $\sim_M$ if reading them puts $M$ in the same state as in Figure~\ref{fig:uvw}.  But once we're in that state, if we read a further word $w$, we will either accept both $uw$ and $vw$ or reject both.  
We prove this formally in the following proposition.

\begin{proposition}
\label{prop:distinguish}
Suppose $L$ is a regular language, and let $M$ be a DFA that recognizes $L$.  Then for any strings $u, v \in A^*$, 
\[
\text{if $u \sim_M v$ 
then $u \sim_L v$} \, . 
\]
\end{proposition}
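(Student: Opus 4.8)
The plan is to unfold both equivalence relations and then reduce the whole statement to the concatenation identity for $\delta^*$ that was already established earlier, namely $\delta^*(s,uv) = \delta^*(\delta^*(s,u),v)$. Suppose $u \sim_M v$. By definition of $\sim_M$ this means reading $u$ and reading $v$ from the initial state leave $M$ in the same state; write $s = \delta^*(\sinit,u) = \delta^*(\sinit,v)$ for this common state. My goal is to show that for every suffix $w \in A^*$ we have $uw \in L$ if and only if $vw \in L$, which is precisely the condition $u \sim_L v$.

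First I would fix an arbitrary $w \in A^*$ and compute the final state reached on input $uw$. Since $M$ recognizes $L$, membership $uw \in L$ is equivalent to $\delta^*(\sinit,uw) \in \Syes$. Applying the concatenation identity with $u$ and $w$ gives $\delta^*(\sinit,uw) = \delta^*(\delta^*(\sinit,u),w) = \delta^*(s,w)$, where the last equality uses $\delta^*(\sinit,u)=s$. The identical computation with $v$ in place of $u$ yields $\delta^*(\sinit,vw) = \delta^*(\delta^*(\sinit,v),w) = \delta^*(s,w)$ as well, because $\delta^*(\sinit,v)=s$ too. The key point is simply that both computations funnel through the same intermediate state $s$, so the remaining computation on $w$ is literally the same in both cases.

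Hence $\delta^*(\sinit,uw) = \delta^*(\sinit,vw)$, and in particular one of these final states lies in $\Syes$ if and only if the other does. Unwinding, $uw \in L$ if and only if $vw \in L$, and since $w$ was arbitrary this establishes $u \sim_L v$. There is no real obstacle in this argument: it is a direct application of the concatenation identity, and the only thing to be careful about is the bookkeeping between $\delta$ and $\delta^*$, together with quoting the identity $\delta^*(s,uv)=\delta^*(\delta^*(s,u),v)$ rather than reproving it. If I wanted a fully self-contained proof I could instead induct on $|w|$, but that would merely duplicate the induction already carried out in the excerpt.
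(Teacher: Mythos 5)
Your proof is correct and follows essentially the same route as the paper's: both apply the concatenation identity $\delta^*(\sinit,uw)=\delta^*(\delta^*(\sinit,u),w)$ to show that $uw$ and $vw$ lead $M$ to the same final state, which lies in $\Syes$ or not regardless of whether we arrived via $u$ or $v$. Naming the common state $s$ is a minor cosmetic difference only.
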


\begin{proof}
Suppose $\delta^*(\sinit,u) = \delta^*(\sinit,v)$.  Then for any string $w$, we have 
\[
\delta^*(\sinit,uw) 
= \delta^*\big( \delta^*(\sinit,u), w \big) 
= \delta^*\big( \delta^*(\sinit,v), w \big) 
= \delta^*(\sinit,vw) \, . 
\]
Thus $uw$ and $vw$ lead $M$ to the same final state.  This state is either in $\Syes$ or not, so $M$ either accepts both $uw$ and $vw$ or rejects them both.  If $M$ recognizes $L$, this means that $uw \in L$ if and only if $vw \in L$, so $u \sim_L v$.
\end{proof}

\noindent
Contrapositively, if $u$ and $v$ are not equivalent with respect to $\sim_L$, they cannot be equivalent with respect to $\sim_M$:
\[
\text{if $u \not\sim_L v$, then 
$u \not\sim_M v$} \, .
\]
Thus each equivalence class requires a different state.  This gives us a lower bound on the number of states that $M$ needs to recognize $L$:
\begin{corollary}
\label{cor:myhill}
Let $L$ be a regular language.  If\, $\sim_L$ has $k$ equivalence classes, then any DFA that recognizes $L$ must have at least $k$ states.
\end{corollary}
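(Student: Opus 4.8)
The plan is to deduce the bound directly from the contrapositive of Proposition~\ref{prop:distinguish}, which says that if $u \not\sim_L v$ then $u \not\sim_M v$. Intuitively, strings from different $\sim_L$-classes must be ``remembered differently'' by $M$, i.e., they must drive $M$ into different states. So if there are $k$ distinct classes, $M$ is forced to use at least $k$ distinct states.

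First I would fix a DFA $M$ with initial state $\sinit$ and transition function $\delta$ that recognizes $L$, and choose a set of representatives $u_1, u_2, \ldots, u_k$, one from each of the $k$ equivalence classes of $\sim_L$. By construction these representatives are pairwise inequivalent with respect to $\sim_L$: for every pair $i \ne j$ we have $u_i \not\sim_L u_j$.

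Next I would apply the contrapositive of Proposition~\ref{prop:distinguish} to each such pair. Since $u_i \not\sim_L u_j$, it follows that $u_i \not\sim_M u_j$, which by the definition of $\sim_M$ means exactly that
\[
\delta^*(\sinit, u_i) \ne \delta^*(\sinit, u_j) \, .
\]
Hence the $k$ states $\delta^*(\sinit, u_1), \ldots, \delta^*(\sinit, u_k)$ are pairwise distinct, so $M$ contains at least $k$ states, as claimed.

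I do not expect any genuine obstacle here, since all the work has been done in Proposition~\ref{prop:distinguish}; the only thing to be careful about is the logical direction. The natural instinct is to try to match states of $M$ to classes of $\sim_L$ directly, but what we actually have is an implication from $\sim_M$ to $\sim_L$, so the argument must run through the contrapositive and pick representatives of the $\sim_L$-classes, using them to witness $k$ distinct states of $M$. It is worth noting that this only gives a lower bound; whether $M$ can be realized with exactly $k$ states (i.e.\ whether a DFA with one state per $\sim_L$-class exists) is a separate matter not asserted by the corollary.
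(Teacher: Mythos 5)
Your proof is correct and is exactly the paper's argument: the paper states the corollary as an immediate consequence of the contrapositive of Proposition~\ref{prop:distinguish} (``if $u \not\sim_L v$, then $u \not\sim_M v$''), which you have simply made explicit by choosing representatives $u_1,\ldots,u_k$ and observing that the states $\delta^*(\sinit,u_i)$ must be pairwise distinct. Your closing caution is also well placed: the corollary only asserts the lower bound, and the matching upper bound is what the paper establishes separately in the Myhill--Nerode Theorem (Theorem~\ref{thm:myhill-nerode}).
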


We can say more than this.  The number of states of the minimal DFA that recognizes $L$ is exactly equal to the number of equivalence classes.  More to the point, the states and equivalence classes are in one-to-one correspondence.  To see this, first do the following exercise:

\begin{exercise}
\label{ex:sim}
Show that if $u \sim_L v$, then $ua \sim_L va$ for any $a \in A$.
\end{exercise}

\loesung{
If $u \sim_L v$, then $uw \in L$ if and only if $vw \in L$, for all $w \in A^*$.  This includes the case where $w$ begins with $a$, i.e., where $w=ay$ for some $y \in A^*$.  Thus $uay \in L$ if and only if $vay \in L$ for all $y \in A^*$, and $ua \sim_L va$.
}

\noindent
Thus for any equivalence class $[u]$ and any symbol $a$, we can unambiguously define an equivalence class $[ua]$.  That is, there's no danger that reading a symbol $a$ sends two strings in $[u]$ to two different equivalence classes.  This gives us our transition function, as described in the following theorem.

\begin{theorem}[Myhill-Nerode]
\label{thm:myhill-nerode}
Let $L$ be a regular language.  Then the minimal DFA for $L$, which we denote $M_{\min}$, can be described as follows.  It has one state for each equivalence class $[u]$.  Its initial state is $[\eps]$, its transition function is 
\[
\delta([u],a) = [ua] \, ,
\]
and its accepting set is 
\[
\Syes = \left\{ [u] \mid u \in L \right\} \, . 
\]
\end{theorem}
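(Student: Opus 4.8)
The plan is to verify three things: that the description yields a well-defined DFA with finitely many states, that this DFA recognizes $L$, and that it is minimal. The minimality will follow essentially for free from Corollary~\ref{cor:myhill}, so the real substance lies in the first two. For well-definedness of the transition function, the point is that $\delta([u],a)=[ua]$ does not depend on the representative chosen for the class: if $[u]=[v]$ then $u\sim_L v$, so by Exercise~\ref{ex:sim} we get $ua\sim_L va$, i.e.\ $[ua]=[va]$. The accepting set $\Syes=\{[u]\mid u\in L\}$ is likewise well-defined, because taking $w=\eps$ in the definition of $\sim_L$ shows that membership in $L$ is constant on each equivalence class. To know that $M_{\min}$ has only finitely many states I would invoke the hypothesis that $L$ is regular: some DFA $M$ with finitely many states recognizes $L$, and by Corollary~\ref{cor:myhill} the number of equivalence classes of $\sim_L$ is at most the number of states of $M$.

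Next I would show $L(M_{\min})=L$. The key lemma is the identity
\[
\delta^*([\eps],w)=[w] \qquad \text{for all } w\in A^*,
\]
proved by induction on the length of $w$. The base case $w=\eps$ is the definition of $\delta^*$, and for $w=ua$ the inductive definition of $\delta^*$ together with the induction hypothesis gives $\delta^*([\eps],ua)=\delta\big(\delta^*([\eps],u),a\big)=\delta([u],a)=[ua]$. Granting this identity, a string $w$ is accepted exactly when $[w]=\delta^*([\eps],w)\in\Syes$, which by the well-definedness of $\Syes$ holds if and only if $w\in L$; hence $L(M_{\min})=L$. As a bonus, the same identity shows every state $[u]$ is reachable---it is reached by reading $u$---so the construction contains no superfluous states.

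Minimality then follows immediately: $M_{\min}$ has exactly $k$ states, one per equivalence class of $\sim_L$, while Corollary~\ref{cor:myhill} guarantees that any DFA recognizing $L$ must have at least $k$ states, so $M_{\min}$ meets the lower bound. I expect the only genuine pitfall to be bookkeeping rather than mathematics: one must consistently track each equivalence class in its two distinct roles---as a \emph{state} of $M_{\min}$ and as the class of a \emph{prefix} of the input---and appreciate that it is precisely Exercise~\ref{ex:sim} that makes the transition function single-valued, i.e.\ that forces all the prefixes lying in one class to march to a common successor class under any symbol $a$. Everything else is an assembly of Exercise~\ref{ex:sim}, Proposition~\ref{prop:distinguish}, and Corollary~\ref{cor:myhill} together with the routine induction above.
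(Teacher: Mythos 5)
Your proof is correct and takes essentially the same route as the paper's: the same induction establishing $\delta^*([\eps],w)=[w]$, the same observation that acceptance then means $[w]\in\Syes$, i.e.\ $w\in L$, and the same appeal to Corollary~\ref{cor:myhill} for minimality. The extra checks you supply---well-definedness of $\delta$ via Exercise~\ref{ex:sim}, constancy of $L$-membership on classes via the $w=\eps$ case, and finiteness of the state set from regularity---are exactly the points the paper disposes of in the discussion immediately preceding the theorem, so they are welcome explicitness rather than a different argument.
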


\noindent
This theorem is almost tautological at this point, but let's go through a formal proof to keep ourselves sharp.

\begin{proof}
We will show by induction on the length of $w$ that $M_{\min}$ keeps track of $w$'s equivalence class.  The base case is clear, since we start in $[\eps]$, the equivalence class of the empty word.  The inductive step follows from 
\[
\delta^*([\eps],wa) = \delta\big( \delta^*([\eps],w), a \big) = \delta([w],a) = [wa] \, . 
\]
Thus we stay in the correct equivalence class each time we read a new symbol.  This shows that, for all strings $w$, 
\[
\delta^*([\eps],w) = [w] \, . 
\]
Finally, $M_{\min}$ accepts $w$ if and only if $[w] \in \Syes$, which by the definition of $\Syes$ means that $w \in L$.  

Thus $M_{\min}$ recognizes $L$, and Corollary~\ref{cor:myhill} shows that any $M$ that recognizes $L$ has at least as many states as $M_{\min}$.  Therefore, $M_{\min}$ is the smallest possible DFA that recognizes $L$.
\end{proof}

Theorem~\ref{thm:myhill-nerode} also shows that the minimal DFA is \emph{unique up to isomorphism}.  That is, any two DFAs that both recognize $L$, and both have a number of states equal to the number of equivalence classes, have the same structure: there is a one-to-one mapping between their states that preserves the transition function, since both of them correspond exactly to the equivalence classes.

What can we say about non-minimal DFAs?  Suppose that $M$ recognizes $L$.  Proposition~\ref{prop:distinguish} shows that $\sim_M$ can't be a coarser equivalence than $\sim_L$.  That is, $\sim_M$ can't lump together two strings that aren't equivalent with respect to $\sim_L$.  But $\sim_M$ could be finer than $\sim_L$, distinguishing pairs of words that it doesn't have to in order to recognize $L$.  In general, the equivalence classes of $\sim_M$ are pieces of the equivalence classes of $\sim_L$, as shown in 
Figure~\ref{fig:breakup}.

\begin{figure}
\begin{center}
\includegraphics[width=1.5in]{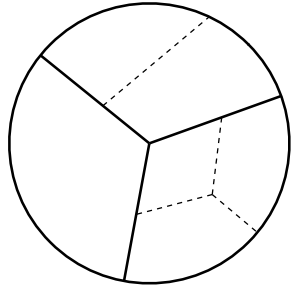}
\end{center}
\caption{The equivalence classes of a language $L$ where $\sim_L$ has three equivalence classes (bold).  A non-minimal DFA $M$ with six states that recognizes $L$ corresponds to a finer equivalence relation $\sim _M$ with smaller equivalence classes (dashed).  It remembers more than it needs to, distinguishing strings $u, v$ that it would be all right to merge.}
\label{fig:breakup}
\end{figure}

Let's say that two states $s, s'$ of $M$ are equivalent if the corresponding equivalence classes of $\sim_M$ lie in the same equivalence class of $\sim_L$.  In that case, if we merge $s$ and $s'$ in $M$'s state space, we get a smaller DFA that still recognizes $L$.  We can obtain the minimal DFA by merging equivalent states until each equivalence class of $\sim_L$ corresponds to a single state.  This yields an algorithm for finding the minimal DFA which runs in polynomial time as a function of the number of states.

The Myhill-Nerode Theorem may seem a little abstract, but it is perfectly concrete.  Doing the following exercise will give you a feel for it if you don't have one already:
\begin{exercise}
Describe the equivalence classes of the three languages from Exercise~\ref{ex:dfa}.  Use them to give the minimal DFA for each language, or prove that the DFA you designed before is minimal.
\end{exercise}

\loesung{
\begin{enumerate}
\item The set of strings in $\{a,b\}^*$ with an even number of $b$s: there are two equivalence classes, namely strings where the number of $b$s is even or odd respectively.
\item The set of strings in $\{a,b,c\}^*$ with no $c$ anywhere to the left of an $a$: there are three equivalence classes, namely strings where we haven't had a $c$, those where we have had a $c$ but no $a$ after it (which are still in the language), and those outside the language.
\item The set of strings in $\{0,1\}^*$ that encode multiples of $3$: interpret the empty string $\eps$ as the number zero. There are three equivalence classes, namely integers whose value mod $3$ is $0$, $1$, or $2$.
\end{enumerate}
Each of the DFAs shown above are minimal, with a single state for each equivalence class.
}

We can also answer some of the questions we raised before about whether we really need as many states as our constructions above suggest.

\begin{exercise}
Describe an infinite family of pairs of languages $(L_p,L_q)$ such that the minimal DFA for $L_p$ has $p$ states, the minimal DFA for $L_q$ has $q$ states, and the minimal DFA for $L_p \cap L_q$ has $pq$ states. 
\end{exercise}

\loesung{
Consider a one-symbol alphabet, $A=\{a\}$.  Let $L_p$ be the set of strings whose length $\ell$ is a multiple of $p$.  It can be recognized by a DFA with $p$ states $\{0,\ldots,p-1\}$, with $\sinit=0$, $\Syes=\{0\}$, and $\delta(i,a) = (i+1) \bmod p$.  Each of these states corresponds to a different equivalence class $[a^i]$, namely those whose length mod $p$ is $i$.  These are inequivalent since if $u \in [a^i]$ and $v \in [a^j]$ where $i \ne j$, then $uw \in L$ but $vw \notin L$ if $w=a^{p-i}$.  Thus this DFA is minimal.

Define $L_q$ similarly; then its minimal DFA has $q$ states.  Their intersection is 
\[
L_p \cap L_q = L_{\lcm(p,q)} \, , 
\]
since $\ell$ is a multiple of both $p$ and $q$ if and only if it is a multiple of their lowest common multiple $\lcm(p,q)$.  By the same argument, the minimal DFA for this language has $\lcm(p,q)$ states.  But if $p$ and $q$ are relatively prime, $\lcm(p,q) = pq$.
}

\begin{exercise}
Describe a family of languages $L_k$, one for each $k$, such that $L_k$ can be recognized by an NFA with $O(k)$ states, but the minimal DFA that recognizes $L_k$ has at least $2^k$ states.  Hint: consider the NFA $M_3$ defined in Section~\ref{sec:nfa} above.
\end{exercise}

\loesung{
Let $A=\{a,b\}$, and let $L_k$ be the set of words whose $k$th-to-last symbol is a $b$.  It can be recognized by an NFA with $k+1$ states analogous to $M_3$.  This NFA can stay in its initial state as long as it wants, but when reading a $b$ it can also move to the first state in a chain of $k$ states.  After that, every transition leads down the chain, and only the last state is accepting.  

We will prove that the minimal DFA that recognizes $L_k$ has at least $2^k$ states.  Let $s, s'$ be two distinct words of length $k$.  If $s=s_1 s_2 \cdots s_{k}$ and $s' = s'_1 s'_2 \cdots s'_{k}$, there is some $t$ such that $s_t \ne s'_t$.  Without loss of generality, say that $s_t = b$ and $s'_t = a$.  Then if $w$ has length $t-1$, $sw \in L$ but $s'w \notin L$.  Thus $s \not\sim s'$, and the number of equivalence classes is at least $2^k$, the number of distinct strings of length $k$.

This is a long-winded way of saying what we know intuitively, namely that we have to keep track of the last $k$ symbols to tell whether the $k$th-to-last symbol is a $b$.  This information is both necessary and sufficient.  In other words, there are exactly $2^k$ equivalence classes: for each string $s$ of length $k$, its equivalence class $[s]$ is the set of strings that end in $s$.  (We treat words of length less than $k$ by pretending that they start with $a$s.)  Thus the minimal DFA for $L_k$ has exactly $2^k$ states.
}

The following exercises show that even reversing a language, or concatenating two languages, can greatly increase the number of states.  Hint: the languages $L_k$ from the previous exercise have many uses.

\begin{exercise}
\label{ex:reverse-harder}
Describe a family of languages $L_k$, one for each $k$, such that $L_k$ can be recognized by a DFA with $O(k)$ states, but the minimal DFA for $L_k^R$ has at least $2^k$ states.
\end{exercise}%

\loesung{
If $L_k$ is the set of strings whose $k$th-to-last symbol is a $b$, then $L_k^R$ is the set of strings whose $k$th symbol is a $b$.  We can recognize $L_k^R$ with a DFA with $k+2$ states, namely a chain of $k$ states ending in a branch between an accepting and rejecting state.
}

\begin{exercise}
Describe a family of pairs of languages $(L_1,L_2)$, one for each $k$, such that $L_1$ can be recognized by a DFA with a constant number of states and $L_2$ can be recognized by a DFA with $k$ states, but the minimal DFA for $L_1 L_2$ has\, $\Omega(2^k)$ states.
\end{exercise}

\loesung{
Let $L_1 = \{a,b\}^*$ and let $L_2 = b \{a,b\}^{k-1}$, the set of strings of length $k$ that start with a $b$.  Then $L_k = L_1 L_2$.
}

The sapient reader will wonder whether there is an analog of the Myhill-Nerode Theorem for NFAs, and whether the minimal NFA for a language has a similarly nice description.  It turns out that finding the minimal NFA is much harder than finding the minimal DFA.  Rather than being in P, it is PSPACE-complete (see Chapter 8).  That is, it is among the hardest problems that can be solved with a polynomial amount of memory.

\section{Nonregular Languages}
\label{sec:nonreg}

The Myhill-Nerode Theorem has another consequence.  Namely, it tells us exactly when a language is regular:

\begin{corollary}
A language $L$ is regular if and only if\; $\sim_L$ has a finite number of equivalence classes.
\end{corollary}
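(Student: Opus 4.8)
The plan is to prove both directions of the biconditional by invoking machinery already in place, so that almost nothing new is needed. For the forward direction, I would assume $L$ is regular and derive that $\sim_L$ has finitely many classes. Regularity gives a DFA $M$ recognizing $L$ with some finite number of states $n$. Corollary~\ref{cor:myhill} says that if $\sim_L$ had $k$ equivalence classes then any DFA for $L$ would need at least $k$ states; applying this with $M$ in hand forces $k \le n$, so $\sim_L$ has at most $n$ classes and in particular finitely many. (The underlying reason, from Proposition~\ref{prop:distinguish}, is that $u \sim_M v$ implies $u \sim_L v$, so $\sim_M$ refines $\sim_L$ and each $\sim_L$-class is a union of the at most $n$ classes of $\sim_M$.)

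For the backward direction, I would assume $\sim_L$ has finitely many equivalence classes and produce a DFA for $L$. This is exactly the construction of the Myhill--Nerode Theorem (Theorem~\ref{thm:myhill-nerode}): take one state per equivalence class, initial state $[\eps]$, transition function $\delta([u],a) = [ua]$, and accepting set $\{ [u] \mid u \in L \}$. The transitions are well-defined by Exercise~\ref{ex:sim}, and the induction in the proof of that theorem shows $\delta^*([\eps],w) = [w]$, so the machine accepts $w$ if and only if $w \in L$. Since there are only finitely many classes, this is a genuine finite-state DFA, and its existence is precisely what ``$L$ is regular'' means.

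The one point requiring care is a potential circularity, and I expect this to be the only step with real content. Theorem~\ref{thm:myhill-nerode} is stated for a language already known to be regular, whereas here I want to \emph{conclude} regularity from the finiteness of $\sim_L$. The resolution is that neither the construction of $M_{\min}$ nor its correctness proof actually uses regularity: they use only Exercise~\ref{ex:sim}, to guarantee that $\delta([u],a)=[ua]$ is unambiguous, and the finiteness of the number of classes, to guarantee a finite state set. So I would simply re-run that construction under the weaker hypothesis, note that it yields a bona fide finite DFA recognizing $L$, and conclude that $L$ is regular. Both directions are then immediate from the results already developed.
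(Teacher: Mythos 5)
Your proof is correct and follows the paper's intended argument exactly: the forward direction via Proposition~\ref{prop:distinguish} and Corollary~\ref{cor:myhill}, and the backward direction via the Myhill--Nerode construction of Theorem~\ref{thm:myhill-nerode}. Your observation that the construction's correctness uses only Exercise~\ref{ex:sim} and finiteness, never regularity itself, is precisely the point the paper leaves implicit when it states the corollary without proof, and you resolve the apparent circularity the right way.
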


Thus to prove that a language $L$ is not regular---that no DFA, no matter how many states it has, can recognize it---all we have to do is show that $\sim_L$ has an infinite number of equivalence classes.  This may sound like a tall order, but it's quite easy.  We just need to exhibit an infinite set of strings $u_1, u_2, \ldots$ such that $u_i \not\sim u_j$ for any $i \ne j$.  And to prove that $u_i \not\sim u_j$, we just need to give a string $w$ such that $u_i w \in L$ but $u_j w \notin L$ or vice versa.

For example, given a string $w$ and a symbol $a$, let $\#_a(w)$ denote the number of $a$s in $w$.  Then consider the following language.: 
\[
L_{a=b} = \left\{ w \in \{a,b\}^* \mid \#_a(w) = \#_b(w) \right\} \, . 
\]
Intuitively, in order to recognize this language we have to be able to count the $a$s and $b$s, and to count up to any number requires an infinite number of states.  Our definition of equivalence classes lets us make this intuition rigorous.  Consider the set of words $\{ a^i \mid i \ge 0 \} = \{ \eps, a, aa, aaa, \ldots \}$.  If $i \ne j$, then $a^i \not\sim a^j$ since 
\[
a^i b^i \in L_{a=b} \quad \text{but} \quad a^j b^i \notin L_{a=b} \, . 
\]
Thus each $i$ corresponds to a different equivalence class. Any DFA with $n$ states will fail to recognize $L$ since it will confuse $a^i$ with $a^j$ for sufficiently large $i, j$.  The best a DFA with $n$ states can do is count up to $n$.

\begin{exercise}
Describe all the equivalence classes of $L_{a=b}$, starting with $[a^i]$.
\end{exercise}

\loesung{
There is one equivalence class for each integer $z \in Z$.  If we write $[a^i] = [i]$, $[b^i] = [-i]$, and $[\eps]=[0]$, then $\delta([i],a) = [i+1]$ and $\delta([i],b) = [i-1]$.
}

Any automaton that recognizes $L_{a=b}$ has to have an infinite number of states.  Figure~\ref{fig:ab} shows an infinite-state automaton that does the job.  The previous exercise shows that this automaton is the ``smallest possible,'' in the sense that each equivalence class corresponds to a single state.  Clearly this automaton, while infinite, has a simple finite description---but not a description that fits within the framework of DFAs or NFAs.

\begin{figure}
\begin{center}
\includegraphics[width=2.5in]{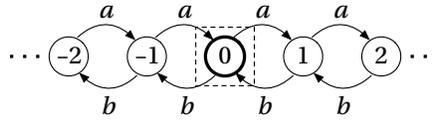}
\end{center}
\caption{The smallest possible infinite-state automaton (yes, that makes sense) that recognizes the non-regular language $L_{a=b}$.}
\label{fig:ab}
\end{figure}

\begin{exercise}
\label{ex:abcd}
Consider the language 
\[
L_{a=b,c=d} = \left\{ w \in \{a,b,c,d\}^* \mid \#_a(w) = \#_b(w) \text{ and } \#_c(w) = \#_d(w) \right\} \, . 
\]
What are its equivalence classes?  What does its minimal infinite-state machine look like?
\end{exercise}

\loesung{
For each $(x,y) \in \Z^2$, there is an equivalence class of words where 
\[ 
\#_a(w) - \#_b(w) = x \quad \text{and} \quad \#_c(w) - \#_d(w) = y \, . 
\]
These classes, and therefore the states of its minimal machine, form a grid.  The initial state, and only accepting state, is at the origin $(0,0)$, and reading $a, b, c$, or $d$ causes the machine to move right, left, up, or down respectively.
}

\begin{exercise}
\label{ex:dyck}
The Dyck language $D_1$ is the set of strings of properly matched left and right parentheses, 
\[ 
D_1=\{ \eps, (), ()(), (()), ()()(), (())(), ()(()), (()()), ((())), \ldots\ \} \, . 
\]
Prove, using any technique you like, that $D_2$ is not regular.  What are its equivalence classes?  What does its minimal infinite-state machine look like?

Now describe the equivalence classes for the language $D_2$ with two types of brackets, round and square.  These must be nested properly, so that $[()]$ is allowed but $[(])$ is not.  Draw a picture of the minimal infinite-state machine for $D_2$ in a way that makes its structure clear.  How does this picture generalize to the language $D_k$ where there are $k$ types of brackets? 
\end{exercise}

\loesung{
(Sketch) $D_1$ has an equivalence class for each $i \in \N$, consisting of the set of strings that are legal so far, but which have $i$ unmatched left parentheses.  These are all distinct, since $(^i )^i \in D_1$ but $(^j )^i \notin D_1$ if $i \ne j$.

The equivalence classes for $D_2$ are more complicated.  Assuming the word is legal so far, there is a sequence of left brackets, round and square, waiting to be matched.  Each such sequence corresponds to a different equivalent class, so the infinite-state machine has a state for each sequence.  It is shaped like a tree: each sequence $\sigma$ has two children, $\sigma($ and $\sigma[$, which we move to if we read $($ or $[$.  If we read a right bracket, we move to $\sigma$'s parent if it matches $\sigma$'s last bracket; otherwise we move to a permanently rejecting state.
}

\section{The Pumping Lemma}
\label{sec:pumping}

The framework of equivalence classes is by far the most simple, elegant, and fundamental way to tell whether or not a language is regular.  But there are other techniques as well.  Here we describe the Pumping Lemma, which states a necessary condition for a language to be regular.  It is not as useful or as easy to apply as the Myhill-Nerode Theorem, but the proof is a nice use of the pigeonhole principle, and applying it gives us some valuable exercise in juggling quantifiers.  It states that any sufficiently long string in a regular language can be ``pumped,'' repeating some middle section of it as many times as we like, and still be in the language.

As before we use $|w|$ to denote the length of a string $w$.  We use $y^t$ to denote $y$ concatenated with itself $t$ times.
\begin{lemma}
\label{lem:pumping}
Suppose $L$ is a regular language.  Then there is an integer $p$ such that any $w \in L$ with $|w| \ge p$ can be written as a concatenation of three strings, $w=xyz$, where
\begin{enumerate}
\item $|xy| \le p$, 
\item $|y| > 0$, i.e., $y \ne \eps$, and
\item for all integers $t \ge 0$, $xy^tz \in L$. 
\end{enumerate}
\end{lemma}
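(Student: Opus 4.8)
The plan is to use the DFA characterization of regular languages together with the pigeonhole principle, turning a repeated state in an accepting computation into a loop we can traverse any number of times. Since $L$ is regular, fix a DFA $M$ with state set $S$, where $|S| = n$, that recognizes $L$, and set $p = n$. The intuition is that a machine with $n$ states cannot read $n$ or more symbols without revisiting some state, and any such revisit exposes a cycle in the transition graph that we are free to go around extra times or skip entirely.

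First I would take an arbitrary $w \in L$ with $|w| \ge p$, write $w = w_1 w_2 \cdots w_m$ with $m \ge p$, and record the states the computation passes through over the first $p$ symbols: set $q_0 = \sinit$ and $q_i = \delta^*(\sinit, w_1 \cdots w_i)$ for $1 \le i \le p$. This is a list of $p+1 = n+1$ states drawn from a set of size $n$, so by pigeonhole there exist indices $0 \le i < j \le p$ with $q_i = q_j$.

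Next I would split $w$ at these indices, letting $x = w_1 \cdots w_i$, $y = w_{i+1} \cdots w_j$, and $z = w_{j+1} \cdots w_m$. Conditions (1) and (2) are then immediate: $|xy| = j \le p$, and $|y| = j - i > 0$ since $i < j$. For condition (3), the key observation is that reading $y$ starting from state $q_i$ returns to $q_j = q_i$; formally $\delta^*(q_i, y) = q_i$, so a one-line induction gives $\delta^*(q_i, y^t) = q_i$ for every $t \ge 0$. Invoking the concatenation identity $\delta^*(s, uv) = \delta^*(\delta^*(s,u), v)$ established earlier, I would then compute $\delta^*(\sinit, xy^t z) = \delta^*(q_i, y^t z) = \delta^*(q_i, z) = \delta^*(\sinit, xyz) = \delta^*(\sinit, w)$, which lies in $\Syes$ because $w \in L$. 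Hence $xy^t z \in L$ for all $t \ge 0$, including $t = 0$, which deletes the loop altogether.

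The argument has no serious obstacle — it is a direct pigeonhole count — but the step requiring the most care is confining the repeated state to the first $p$ symbols. It is tempting to apply pigeonhole to the whole computation on $w$, yet that would only bound $|xyz|$, not $|xy|$; restricting attention to the prefix of length $p$ is exactly what secures condition (1). The only other point to keep honest is the induction showing $\delta^*(q_i, y^t) = q_i$, which rests on the $w = uv$ concatenation rule for $\delta^*$ already proved in the text.
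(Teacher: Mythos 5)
Your proof is correct and follows essentially the same route as the paper's: set $p$ equal to the number of states of a DFA recognizing $L$, apply the pigeonhole principle to the $p+1$ states visited while reading the first $p$ symbols, split $w=xyz$ at the two visits to the repeated state, and induct on $t$ to show $\delta^*(\sinit,xy^t)$ returns to that state so that $\delta^*(\sinit,xy^tz)=\delta^*(\sinit,w)\in\Syes$. The only cosmetic difference is that the paper takes $M$ to be the minimal DFA while you take an arbitrary one, which changes nothing; your closing remark about restricting pigeonhole to the length-$p$ prefix to secure $|xy|\le p$ is exactly the right point of care.
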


\begin{proof}
As the reader may have guessed, the constant $p$ is the number of states in the minimal DFA $M$ that recognizes $L$.  Including the initial state $\sinit$, reading the first $p$ symbols of $w$ takes $M$ to $p+1$ different states.  By the pigeonhole principle, two of these states must be the same, which we denote $s$.  Let $x$ be the part of $w$ that takes $M$ to $s$ for the first time, let $y$ be the part of $w$ that brings $M$ back to $s$ for its first return visit, and let $z$ be the rest of $w$ as shown in Figure~\ref{fig:pumping}.  Then $x$ and $y$ satisfy the first two conditions, and 
\[
\delta^*(\sinit,x) = \delta^*(\sinit,xy) = \delta^*(s,y) = s \, .
\]
By induction on $t$, for any $t \ge 0$ we have $\delta^*(\sinit,xy^t) = s$, and therefore $\delta^*(\sinit,xy^tz) = \delta^*(s,z)$.
In particular, since $w = xyz \in L$ we have $\delta^*(\sinit,xyz) \in \Syes$, so $xy^tz \in L$ for all $t \ge 0$.
\end{proof}

\begin{figure}
\begin{center}
\includegraphics[width=2in]{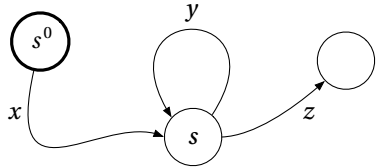}
\end{center}
\caption{If a DFA has $p$ states, the first $p$ symbols of $w$ must cause it to visit some state $s$ twice.  We let $x$ be the part of $w$ that first takes $M$ to $s$, let $y$ be the part of $w$ that brings $M$ back to $s$ for the first return visit, and let $z$ be the rest of $w$.  Then for any $t$, the word $xy^tz$ takes $M$ to the same state that $w=xyz$ does.}
\label{fig:pumping}
\end{figure}

Note that the condition described by the Pumping Lemma is necessary, but not sufficient, for $L$ to be regular.  In other words, while it holds for any regular language, it also holds for some non-regular languages.  Thus we can prove that $L$ is not regular by showing that the Pumping Lemma is violated, but we cannot prove that $L$ is regular by showing that it is fulfilled.

Logically, the Pumping Lemma consists of a nested series of quantifiers.  Let's phrase it in terms of $\exists$ (there exists) and $\forall$ (for all).  If $L$ is regular, then
\begin{tabbing}
$\exists$ an integer $p$ such that \\ 
\qquad $\forall w \in L$ with $|w| \ge p$, \\
\qquad \qquad $\exists x, y, z$ such that $w=xyz$, $|xy| \le p$, $y \ne \eps$, and \\
\qquad \qquad \qquad $\forall$ integers $t \ge 0$, $xy^tz \in L$.
\end{tabbing}
Negating all this turns the $\exists$s into $\forall$s and vice versa.  Thus if you want to use the Pumping Lemma to show that $L$ is not regular, you need to show that 
\begin{tabbing}
$\forall$ integers $p$, \\ 
\qquad $\exists w \in L$ with $|w| \ge p$ such that \\
\qquad \qquad $\forall x, y, z$ such that $w=xyz$, $|xy| \le p$, and $y \ne \eps$, \\
\qquad \qquad \qquad $\exists$ an integer $t \ge 0$ such that $xy^tz \notin L$.
\end{tabbing}

You can think of this kind of proof as a game.  You are trying to prove that $L$ is not regular, and I am trying to stop you.  The $\forall$s are my turns, and the $\exists$s are your turns.  I get to choose the integer $p$.  No matter what $p$ I choose, you need to be able to produce a string $w \in L$ of length at least $p$, such that no matter how I try to divide it into a beginning, middle, and end by writing $w=xyz$, you can produce a $t$ such that $xy^tz \notin L$.  

If you have a winning strategy in this game, then the Pumping Lemma is violated and $L$ is not regular.  But it's not enough, for instance, for you to give an example of a word $w$ that can't be pumped---you have to be able to give such a word which is longer than any $p$ that I care to name.

Let's illustrate this by proving that the language
\[
L = \{ a^n b^n \mid n \ge 0 \} = \{ \eps, ab, aabb, aaabbb, \ldots\ \} 
\]
is not regular.  This is extremely easy with the equivalence class method, but let's use the Pumping Lemma instead.  First I name an integer $p$.  You then reply with $w = a^p b^p$.  No matter how I try to write it as $w=xyz$, the requirement that $|xy| \le p$ means that both $x$ and $y$ consist of $a$s.  In particular, $y=a^q$ for some $q > 0$, since $y \neq \eps$.  But then you can take $t=0$, and point out that $xz = a^{p-q} b^q \notin L$.  Any other $t \ne 1$ works equally well.

\begin{exercise}
Prove that each of these languages is nonregular, first using the equivalence class method and then using the Pumping Lemma.  
\begin{enumerate}
\item $\{a^{2^i} \mid i \ge 0\} = \{a, aa, aaaa, aaaaaaaa, \ldots \}$.

\loesung{
Let $u=a^{2^i}$ and $v=a^{2^j}$ for $i \ne j$.  Then $uu = a^{2^{i+1}} \in L$ but $vu = a^{2^i+2^j} \notin L$, since $2^i+2^j$ is not a power of $2$ unless $i=j$.  Thus $u \not\sim v$.  These are not all the equivalence classes, but they are enough to show that there are an infinite number, and therefore that $L$ is not regular.

Alternately, let $p$ be any constant, and choose $w = a^{2^i}$ so that $2^i > p$.  If $w=xyz$ with $|xy| \le p$ and $|y| > 0$, then $y=a^t$ for some $0 < t \le p < 2^i$.  But then $xy^2z = a^{2^i+t} \notin L$, since $2^i+t$ is not a power of $2$.  This violates the pumping lemma, so $L$ is not regular.
}

\item The language $\Lpal$ of palindromes over a two-symbol alphabet, i.e., $\big\{w \in \{a,b\}^* \mid w=w^R \big\}$.

\loesung{
Let $u=a^i b$ and $v=a^j b$ for $i \ne j$.  Then $u \not\sim v$, since $ua^i \in L$ but $va^i \notin L$.  Thus there are an infinite number of equivalence classes, and $L$ is not regular.

Alternately, given a constant $p$, let $w=a^p b a^p$.  If $w=xyz$ with $|xy| \le p$ and $|y| > 0$, then $y=a^t$ for some $t > 0$ and occurs in the first half of $w$, and $xz = a^{p-t} b a^t \notin L$.  This violates the pumping lemma, so $L$ is not regular.
}

\item The language $\Lcopy$ of words repeated twice, $\big\{ ww \mid w \in \{a,b\}^* \big\}$.

\loesung{
Let $u=a^i b$ and $v=a^j b$ for $i \ne j$.  Then $u \not\sim v$ since $uu \in L$ but $vu \notin L$.  Thus there are an infinite number of equivalence classes, and $L$ is not regular.

Alternately, given a constant $p$, let $w=a^p b a^p b$.  If $w=xyz$ with $|xy| \le p$ and $|y| > 0$, then $y=a^t$ for some $t > 0$ and occurs in the first half of $w$, and $xz = a^{p-t} b a^p b \notin L$.  This violates the pumping lemma, so $L$ is not regular.
}

\item $\{ a^i b^j \mid i > j \ge 0\}$.

\loesung{
Let $u=a^i$ and $v=a^k$ for $i \ne k$.  Assume without loss of generality that $i > k$.  Then $u b^k = a^i b^k \in L$ but $v b^k = a^k b^k \notin L$.  Thus there are an infinite number of equivalence classes, and $L$ is not regular.

Alternately, given a constant $p$, let $w=a^p b^{p-1}$.  If $w=xyz$ with $|xy| \le p$ and $|y| > 0$, then $y=a^t$ for some $t > 0$ and occurs in the first half of $w$, and $xz = a^{p-t} b^{p-1} \notin L$.  This violates the pumping lemma, so $L$ is not regular.
}
\end{enumerate}
\end{exercise}

\begin{exercise}
Given a language $L$, the language $\sort(L)$ consists of the words in $L$ with their characters sorted in alphabetical order.  For instance, if 
\[
L=\{bab, cca, abc\}
\]
then
\[
\sort(L) = \{abb,acc,abc\} \, .
\]
Give an example of a regular language $L_1$ such that $\sort(L_1)$ is nonregular, and a nonregular language $L_2$ such that $\sort(L_2)$ is regular.  You may use any technique you like to prove that the languages are nonregular.

\end{exercise}

\section{Regular Expressions}
\label{sec:regexp}

In a moment, we will move on from finite to infinite-state machines, and define classes of automata that recognize many of the non-regular languages described above.  But first, it's worth looking at one more characterization of regular languages, because of its elegance and common use in the real world.  A \emph{regular expression} is a parenthesized expression formed of symbols in the alphabet $A$ and the empty word, combined with the operators of concatenation, union (often written $+$ instead of $\cup$) and the Kleene star $*$.  Each such expression represents a language.  For example, 
\[
(a+ba)^* \,(\eps+b) 
\]
represents the set of strings generated in the following way: as many times as you like, including zero, print $a$ or $ba$.  Then, if you like, print $b$.  A moment's thought shows that this is our old friend $\nobb$.  There are many other regular expressions that represent the same language, such as
\[
(\eps+b) \,(aa^* \,b)^* \,a^* \, . 
\]

We can define regular expressions inductively as follows.
\begin{enumerate}
\item $\emptyset$ is a regular expression.
\item $\eps$ is a regular expression.
\item Any symbol $a \in A$ is a regular expression.
\item If $\phi_1$ and $\phi_2$ are regular expressions, then so is $\phi_1 \phi_2$.
\item If $\phi_1$ and $\phi_2$ are regular expressions, then so is $\phi_1 + \phi_2$.
\item If $\phi$ is a regular expression, then so is $\phi^*$.
\end{enumerate}
In case it isn't already clear what language a regular expression represents, 
\begin{align*}
L(\emptyset) &= \emptyset \\
L(\eps) &= \{\eps\} \\
L(a) &= \{a\} \ \mbox{for any $a \in A$} \\
L(\phi_1 \phi_2) &= L(\phi_1) \,L(\phi_2) \\
L(\phi_1 + \phi_2) &= L(\phi_1) \cup L(\phi_2) \\
L(\phi^*) &= L(\phi)^* \, . 
\end{align*}

Regular expressions can express exactly the languages that DFAs and NFAs recognize.  In one direction, the proof is easy:

\begin{theorem}
If a language can be represented as a regular expression, then it is regular.
\end{theorem}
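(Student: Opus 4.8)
The plan is to proceed by structural induction on the regular expression $\phi$, proving in each case that the language $L(\phi)$ it represents is regular. The inductive definition of regular expressions has exactly six clauses---three base cases ($\emptyset$, $\eps$, and a single symbol $a$) and three combining rules (concatenation, union, and the Kleene star)---and the recursion is well-founded, since each combining rule builds an expression out of strictly smaller subexpressions. So it suffices to verify the three base cases directly and to check that the three operations preserve regularity; but the latter are exactly the closure properties we have already proved.

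First I would dispatch the base cases. The expression $\emptyset$ represents the empty language, which is recognized by a one-state DFA whose single state is rejecting---equivalently, it is a finite language (with no words) and hence regular. The expressions $\eps$ and $a$ represent the finite languages $\{\eps\}$ and $\{a\}$, which are regular by the exercise showing that every finite language is DFA-regular. For the inductive step, suppose the claim holds for $\phi_1$ and $\phi_2$, so that $L(\phi_1)$ and $L(\phi_2)$ are regular. Then $L(\phi_1 \phi_2) = L(\phi_1)\,L(\phi_2)$ is regular by closure under concatenation (the Proposition above), $L(\phi_1 + \phi_2) = L(\phi_1) \cup L(\phi_2)$ is regular by closure under union, and $L(\phi^*) = L(\phi)^*$ is regular by closure under the Kleene star (the exercise above). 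Since every regular expression is obtained from the base cases by finitely many applications of these three operations, the induction closes.

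Honestly, there is almost no obstacle here: all the real work was done when we established the closure properties, so this proof is mostly an assembly job. The one subtlety worth flagging is that the induction must be on the \emph{syntactic} structure of the expression, not on the language it denotes---two different expressions can represent the same language, so there is no well-founded induction on languages, whereas the parse tree of an expression genuinely shrinks under each combining rule. It is also reassuring that we are free to build whichever kind of automaton is convenient at each stage---the concatenation and star constructions are cleanest as NFAs---since by the NFA-to-DFA theorem the resulting language is DFA-regular either way.
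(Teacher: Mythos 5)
Your proof is correct and takes exactly the same route as the paper's: structural induction on the expression, with the base cases $\emptyset$, $\{\eps\}$, $\{a\}$ handled as (finite, hence) regular languages and the inductive step supplied by the already-proved closure under concatenation, union, and Kleene star. Your remarks on inducting over syntax rather than denoted languages, and on freely mixing NFA and DFA constructions, are sound elaborations of what the paper leaves implicit.
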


\begin{proof}
This follows inductively from the fact that $\emptyset$, $\{\eps\}$ and $\{a\}$ are regular languages, and that the regular languages are closed under concatenation, union, and $*$.  
\end{proof}

The other direction is a little harder:

\begin{theorem}
If a language is regular, then it can be represented as a regular expression.
\end{theorem}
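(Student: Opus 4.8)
The plan is to prove that any regular language is represented by some regular expression by working with a DFA $M$ that recognizes $L$, and building up regular expressions for increasingly permissive classes of paths through $M$. This is the classical \emph{state-elimination} / Kleene-closure argument. First I would label the states of $M$ as $\{1, 2, \ldots, n\}$, and for each pair of states $i, j$ and each $k$ with $0 \le k \le n$, define $R^{(k)}_{ij}$ to be the set of strings $w$ such that reading $w$ takes $M$ from state $i$ to state $j$ \emph{without passing through any intermediate state numbered higher than $k$}. (The endpoints $i$ and $j$ themselves are allowed to exceed $k$.) The goal is to show by induction on $k$ that each $R^{(k)}_{ij}$ is represented by a regular expression. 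Once this is done, the language $L(M)$ is the union over all accepting states $j \in \Syes$ of $R^{(n)}_{\sinit\, j}$, a finite union of regular expressions, hence itself a regular expression.

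The base case $k=0$ is the easy part: a path from $i$ to $j$ using no intermediate states is either a single symbol (for each $a$ with $\delta(i,a)=j$) or, when $i=j$, the empty string as well. This is a finite set of strings, so it is represented by a finite sum of symbols (plus $\eps$ when $i=j$), or by $\emptyset$ if there are no such transitions. The inductive step is the heart of the argument: a path from $i$ to $j$ that may use intermediate states up to $k$ either avoids state $k$ entirely---so it lives in $R^{(k-1)}_{ij}$---or it visits $k$ one or more times, in which case it splits into a segment from $i$ to $k$ not revisiting $k$, followed by zero or more loops from $k$ back to $k$ not revisiting $k$, followed by a segment from $k$ to $j$ not revisiting $k$. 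This decomposition gives the recurrence
\[
R^{(k)}_{ij} = R^{(k-1)}_{ij} \;+\; R^{(k-1)}_{ik} \left( R^{(k-1)}_{kk} \right)^* R^{(k-1)}_{kj} \, ,
\]
which is a regular expression built from the expressions guaranteed by the inductive hypothesis, using only union, concatenation, and $*$.

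The main obstacle, and the step requiring the most care, is justifying this decomposition rigorously---in particular, arguing that every string inducing a path from $i$ to $j$ through intermediate states up to $k$ factors \emph{uniquely} in the claimed way by cutting at each visit to state $k$, and conversely that every string matching the right-hand expression does induce such a path. The subtlety is the ``not revisiting $k$'' bookkeeping: the loops in $(R^{(k-1)}_{kk})^*$ must each avoid $k$ as an intermediate state while being allowed to touch $k$ at their endpoints, so that concatenating them reconstructs exactly the multiset of $k$-visits of the original path. I would verify this by an induction on the number of times the path visits state $k$, peeling off the portion up to the first visit and appealing to the inductive structure for the remainder. With the recurrence established, the inductive step on $k$ follows immediately from closure of regular expressions under the three operations, and the theorem is complete.
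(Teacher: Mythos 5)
Your proof is correct, but it takes a genuinely different route from the one in the paper. What you describe is the inductive (McNaughton--Yamada/Kleene) construction: you tabulate, for every pair of states $i,j$ and every cutoff $k$, an expression for $R^{(k)}_{ij}$, and grow $k$ one state at a time via the recurrence $R^{(k)}_{ij} = R^{(k-1)}_{ij} + R^{(k-1)}_{ik} \bigl( R^{(k-1)}_{kk} \bigr)^* R^{(k-1)}_{kj}$, finally taking the union of $R^{(n)}_{\sinit j}$ over $j \in \Syes$. The paper instead performs \emph{state elimination} on the transition diagram itself: it first funnels all accepting states into a single state $\saccept$ via $\eps$-transitions, then repeatedly deletes an interior state $s$, replacing each two-arrow path $u \to s \to v$ (with labels $\phi_1, \phi_2$ and possibly a self-loop $\phi_3$ on $s$) by an arrow labeled $\phi_1 \phi_3^* \phi_2$, and merging parallel arrows with $+$, until only $\sinit$ and $\saccept$ remain. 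The two proofs are dual presentations of one idea---eliminating state $k$ from the diagram implements exactly your recurrence---but they trade off differently. Your tabular version is easier to make fully rigorous, since everything reduces to a clean double induction (on $k$, and within the step on the number of visits to $k$), whereas the paper's correctness argument for the rewriting rules is stated informally by appeal to the semantics of concatenation, $+$, and $*$. Conversely, the paper's method is more economical and algorithmic: it never builds the full $n^2 \times n$ table of expressions, it lets you choose the elimination order to keep expressions small, and by working directly with an NFA it skips your restriction to a DFA (harmless, given Theorem~\ref{thm:nfa-dfa}, but an extra appeal). One small remark on your write-up: the factorization at visits to state $k$ need not be \emph{unique}, and uniqueness is not needed---for the inclusion $R^{(k)}_{ij} \subseteq R^{(k-1)}_{ij} + R^{(k-1)}_{ik} ( R^{(k-1)}_{kk} )^* R^{(k-1)}_{kj}$ you only need \emph{some} factorization, obtained by cutting at every visit to $k$; the reverse inclusion is immediate by pasting paths. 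Also take care with the base case $k=0$ when several symbols $a$ satisfy $\delta(i,a)=j$: the label is the finite sum of all of them, which your write-up does handle correctly.
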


\begin{proof}
We start with the transition diagram of an NFA that recognizes $L$.  We allow each arrow to be labeled with a regular expression instead of just a single symbol.  We will shrink the diagram, removing states and edges and updating these labels in a certain way, until there are just two states left with a single arrow between them.

First we create a single accepting state $\saccept$ by drawing $\eps$-transitions from each $s \in \Syes$ to $\saccept$.  We then reduce the number of states as follows.  Let $s$ be a state other than $\sinit$ and $\saccept$.  We can remove $s$, creating new transitions between its neighbors.  For each pair of states $u$ and $v$ with arrows leading from $u$ to $s$ and from $s$ to $v$, labeled with $\phi_1$ and $\phi_2$ respectively, we create an arrow from $u$ to $v$ labeled with with $\phi_1 \phi_2$.  If $s$ had a self-loop labeled with $\phi_3$, we label the new arrow with $\phi_1 \phi_3^* \phi_2$ instead.  

We also reduce the number of edges as follows.  Whenever we have two arrows pointing from $u$ to $v$ labeled with expressions $\phi_1$ and $\phi_2$ respectively, we replace them with a single arrow labeled with $\phi_1+\phi_2$.  Similarly, if $u$ has two self-loops labeled $\phi_1$ and $\phi_2$, we replace them with a single self-loop labeled $\phi_1+\phi_2$.

We show these rules in Figure~\ref{fig:gfa}.  The fact that they work follows from our definition of the language represented by a regular expression.  A path of length two gets replaced with $\phi_1 \phi_2$ since we go through one arrow and then the other, a loop gets replaced with $\phi^*$ since we can go around it any number of times, and a pair of arrows gets replaced with $\phi_1+\phi_2$ since we can follow one arrow or the other.  

After we have reduced the diagram to a single arrow from $\sinit$ to $\saccept$ labeled $\phi$, with perhaps a self-loop on $\sinit$ labeled $\phi_0$, then the regular expression for the language is $\phi_0^* \phi$.  If there is no such self-loop, then the regular expression is simply $\phi$.
\end{proof}

\begin{figure}
\begin{center}
\includegraphics[width=2.8in]{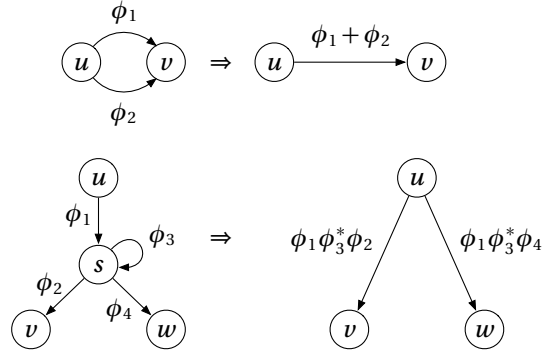}
\end{center}
\caption{Rules for reducing the size of an NFA's transition diagram while labeling its arrows with regular expressions.  When the entire diagram has been reduced to $\sinit$ and $\saccept$ with a single arrow between them, its label is the regular expression for $L$.}
\label{fig:gfa}
\end{figure}

\begin{exercise}
Apply this algorithm to the DFA for $\nobb$, and for the three languages in Exercise~\ref{ex:dfa}.  Note that there are usually multiple orders in which you can remove states from the diagram.  Use these to produce two different regular expressions for each of these languages.
\end{exercise}

Like DFAs and NFAs, regular expressions make it easier to prove certain things about regular languages.  

\begin{exercise}
Recall the definition of $L^R$ from Exercise~\ref{ex:reverse}.  Give a simple inductive proof using regular expressions that $L$ is regular if and only if $L^R$ is regular.
\end{exercise}

\loesung{
Base cases: $\emptyset^R = \emptyset$ is regular, $\{\eps\}^R = \{\eps\}$ is regular, and $\{a\}^R = \{a\}$ is regular for any $a \in A$.

Inductive steps: suppose that $L$ is described by a regular expression $\phi$, which is defined from smaller regular expressions $\phi_1, \phi_2$ by concatenation, union, or $*$.  Assume by induction that $\phi_1^R$ and $\phi_2^R$ are regular.  We can define $\phi^R$ inductively as follows, 
\[
(\phi_1 \phi_2)^R = \phi_2^R \phi_1^R \, , \quad 
(\phi_1 + \phi_2)^R = \phi_1^R + \phi_2^R \, , \quad \text{and} \quad
(\phi_1^*)^R = (\phi_1^R)^* \, . 
\]
The corresponding languages are regular, since the regular languages are closed under concatenation, union, and $*$.  Thus $L^R = L(\phi^R)$ is regular.

This shows that if $L$ is regular, then $L^R$ is regular.  But since $(L^R)^R = L$, the converse is true as well, so $L$ is regular if and only if $L^R$ is.
}

The reader might wonder why we don't allow other closure operators, like intersection or complement, in our definition of regular expressions.  In fact we can, and these operators can make regular expressions exponentially more compact.  In practice we usually stick to concatenation, union, and $*$ because there are efficient algorithms for searching a text for strings matched by expressions of this form.

Note that unlike DFAs and NFAs, regular expressions do not give an algorithm for recognizing a language, taking a word $w$ as input and saying ``yes'' or ``no'' if $w \in L$ or not.  Instead, they \emph{define} a language, creating it ``all at once'' as a set.  Below we will see yet another approach to languages---a grammar that generates words, building them from scratch according to simple rules.  But first, let's step beyond DFAs and NFAs, and look at some simple kinds of infinite-state machines.

\section{Counter Automata}

We saw earlier that some languages require an infinite-state machine to recognize them.  Of course, \emph{any} language can be recognized by some infinite-state machine, as the following exercise shows:
\begin{exercise}
Show that any language $L \subseteq A^*$ can be recognized by a machine with a countably infinite set of states.  Hint: consider an infinite tree where each node has $|A|$ children.
\end{exercise}
\noindent
But the vast majority of such machines, like the vast majority of languages, have no finite description.  Are there reasonable classes of infinite-state machines whose state spaces are structured enough for us to describe them succinctly?

A common way to invent such machines is to start with a finite-state automaton and give it access to some additional data structure.  For instance, suppose we give a DFA access to a \emph{counter}: a data structure whose states correspond to nonnegative integers.  To keep things simple, we will only allow the DFA to access and modify this counter in a handful of ways.  Specifically, the DFA can only update the counter by incrementing or decrementing its value by $1$.  And rather than giving the DFA access to the counter's value, we will only allow it to ask whether it is zero or not.  

Let's call this combined machine a \emph{deterministic one-counter automaton}, or $1$-DCA for short.  We can represent it in several ways.  One is with a transition function of the form
\[
\delta : S \times \{ \textrm{zero}, \textrm{nonzero} \} \times A \to S \times \{ \textrm{inc}, \textrm{dec}, \textrm{do\ nothing} \} \, . 
\]
This function takes the current state of the DFA, the zeroness or nonzeroness of the counter, and an input symbol.  It returns the new state of the DFA and an action to perform on the counter.  As before, we specify an initial state $\sinit \in S$.  For simplicity, we take the initial value of the counter to be zero.  

The state space of a $1$-DCA looks roughly like Figure~\ref{fig:ab}, although there the counter takes both positive and negative values.  More generally, the state space consists of a kind of product of the DFA's transition diagram $S$ and the natural numbers $\N$, with a state $(s,n)$ for each pair $s \in S$, $n \in \N$.  It accepts a word if its final state is in some accepting set $\Syes$.  However, to allow its response to depend on the counter as well as on the DFA, we define $\Syes$ as a subset of $S \times \{ \textrm{zero}, \textrm{nonzero} \}$.

We have already seen several non-regular languages that can be recognized by a one-counter automaton, such as the language $L_{a=b}$ of words with an equal number of $a$s and $b$s.  
Thus counter automata are more powerful than DFAs.  On the other hand, a language like $\Lpal$, the set of palindromes, seems to require much more memory than a one-counter automaton possesses.  Can we bound the power of counter automata, as we bounded the power of finite-state automata before?  

Let's generalize our definition to automata with $k$ counters, allowing the DFA to increment, decrement, and check each one for zero, and call such things $k$-DCAs.  As in Exercise~\ref{ex:abcd}, the state space of such a machine is essentially a $k$-dimensional grid.  The following theorem uses the equivalence class machinery we invented for DFAs.  It shows that, while the number of equivalence classes of a language accepted by a $k$-DCA may be infinite, it must grow in a controlled way as a function of the length of the input.

\begin{theorem}
\label{thm:dca}
Let $M$ be a $k$-DCA.  In its first $t$ steps, the number of different states it can reach is $\bigo(t^k)$, i.e., at most $C t^k$ for some constant $C$ (where $C$ depends on $M$ but not on $t$).  Therefore, if $M$ recognizes a language $L$, it must be the case that $\sim_L$ has $O(t^k)$ different equivalence classes among words of length $t$.
\end{theorem}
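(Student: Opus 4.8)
The plan is to prove the two halves in sequence: first bound the number of full machine configurations reachable in $t$ steps by $\bigo(t^k)$, and then transfer this bound to the number of $\sim_L$ equivalence classes by re-running the argument of Proposition~\ref{prop:distinguish} in this infinite-state setting.

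For the first half, a configuration of $M$ is a pair $(s,\mathbf{n})$, where $s \in S$ is the state of the underlying DFA and $\mathbf{n} = (n_1,\ldots,n_k) \in \N^k$ records the $k$ counter values. The key observation---really the heart of the theorem---is that each counter changes by at most $1$ per step, since the only allowed operations are increment, decrement, and do nothing. Starting from $n_i = 0$, after $t$ steps we therefore have $0 \le n_i \le t$ for every $i$. Thus the counter vectors reachable within $t$ steps lie in the box $\{0,1,\ldots,t\}^k$, giving at most $(t+1)^k$ possible vectors and hence at most $|S|\,(t+1)^k = \bigo(t^k)$ reachable configurations in all. The constant $C$ absorbs $|S|$ and the binomial factors, and depends only on $M$.

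For the second half, I would define $u \sim_M v$ exactly as before, declaring two words equivalent when reading each from the initial configuration $(\sinit,\mathbf{0})$ lands $M$ in the same configuration $(s,\mathbf{n})$. Because $M$ is deterministic and its transitions depend only on the current configuration and the next symbol, the configuration after reading $uw$ is a function of the configuration after reading $u$ together with $w$; this is the exact analog of the composition law $\delta^*(\sinit,uw) = \delta^*(\delta^*(\sinit,u),w)$ used in Proposition~\ref{prop:distinguish}. Hence if $u \sim_M v$ then $uw$ and $vw$ reach identical configurations for every $w$, so $M$ accepts or rejects them together, and $u \sim_L v$. Two words of length $t$ that land in the same configuration therefore belong to the same $\sim_L$ class, so the number of $\sim_L$ classes represented by length-$t$ words is at most the number of configurations reachable in $t$ steps, namely $\bigo(t^k)$.

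The argument is short, so there is no single hard obstacle; the one point requiring care is making the composition law precise for the counter machine, so that the zero/nonzero tests and counter updates genuinely depend only on the current configuration (the configuration bound does the rest). The genuinely essential idea is the per-step bounded-change observation that confines the counters to a box of side $t$---this is what turns an a priori infinite state space into only polynomially many reachable configurations.
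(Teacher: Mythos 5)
Your proposal is correct and follows essentially the same route as the paper, whose (exercise) solution likewise observes that no counter can exceed $t$ after $t$ steps, bounding the reachable configurations by $|S|\,t^k$, and then invokes the equivalence-class machinery of Proposition~\ref{prop:distinguish} to bound the number of $\sim_L$ classes among words of length $t$. Your write-up merely makes explicit two points the paper leaves implicit---the $(t+1)^k$ box for the counter vectors and the verification that the composition law $\delta^*(\sinit,uw)=\delta^*(\delta^*(\sinit,u),w)$ carries over to full configurations of a deterministic real-time counter machine---which is a faithful elaboration, not a different argument.
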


\begin{exercise}
Prove this theorem.  
\end{exercise}

\loesung{
In the first $t$ steps, no counter can increase in value beyond $t$.  Thus the total number of states $M$ can be in is at most $|S| t^k$.
}

Note that these automata are required to run in ``real time,'' taking a single step for each input symbol and returning their answer as soon as they reach the end of the input.  Equivalently, there are no $\eps$-transitions.  As we will see in Section~7.6, relaxing this requirement makes even two-counter automata capable of universal computation.

Theorem~\ref{thm:dca} lets us prove pretty much everything we might want to know about counter automata.  For starters, the more counters we have, the more powerful they are:

\begin{exercise}
Describe a family of languages $L_k$ for $k = 0, 1, 2 \ldots$ that can be recognized by a $k$-DCA but not by a $k'$-DCA for any $k' < k$.
\end{exercise}

\loesung{
(Sketch) Let $A=\{a_1,b_1,a_2,b_2,\ldots,a_k,b_k\}$, and define $L$ as the set of words $w$ where, for each $1 \le i \le k$, $\#_{a_i}(w) = \#_{b_i}(w)$.  There is a distinct equivalence class for each vector $(z_1,\ldots,z_k) \in \Z^k$.  If $w$ is of length $t$, we know that $\sum_i |z_i| \le t$; this gives about $2^{k-1} t^k / k!$ possible states (exercise), which is not $O(t^{k'})$ for any $k' < k$.
}

The next exercise confirms our intuition that we need $k_1+k_2$ counters to run a $k_1$-DCA and a $k_2$-DCA in parallel.

\begin{exercise}
Show that for any integers $k_1, k_2 \ge 0$, there are languages $L_1, L_2$ that can be recognized by a $k_1$-DCA and a $k_2$-DCA respectively, such that $L_1 \cap L_2$ and $L_1 \cup L_2$ cannot be recognized by a $k$-DCA for any $k < k_1+k_2$.
\end{exercise}

\loesung{
(Sketch) Use the same approach as the previous exercise, with $k=k_1+k_2$.  Let $L_1$ require that $\#_{a_i}(w) = \#_{b_i}(w)$ for $1 \le i \le k_1$, and let $L_2$ require this for $k_1 < i \le k$.
}

On the other hand, if we lump $k$-DCAs together for all  $k$ into a single complexity class, it is closed under all Boolean operations:
\begin{exercise}
Say that $L$ is \emph{deterministic constant-counter} if it can be recognized by a $k$-DCA for some constant $k$.  Show that the class of such languages is closed under intersection, union, and complement.
\end{exercise}

We can similarly define nondeterministic counter automata, or NCAs.  As with NFAs, we say that an NCA accepts a word if there exists a computation path leading to an accepting state.  NCAs can be more powerful than DCAs with the same number of counters:
\begin{exercise}
Consider the language
\[
L_{a=b \vee b=c} = \left\{ w \in \{a,b,c\}^* \mid \#_a(w) = \#_b(w) \mbox{ or } \#_b(w) = \#_c(w)\right\} \, . 
\]
Show that $L_{a=b \vee b=c}$ can be recognized by a $1$-NCA, but not by a $1$-DCA.
\end{exercise}

\loesung{
(Sketch) A DCA needs to keep track of both $\#_a(w)-\#b(w)$ and $\#_b(w)-\#_c(w)$.  For words of length $t$, this gives $\Omega(t^2)$ equivalence classes.  An NCA, on the other hand, can guess at the outset which of these to keep track of, and devote one counter to it.
}

Note how the $1$-NCA for this language uses nondeterminism in an essential way, choosing whether to compare the $a$s with the $b$s, or the $b$s with the $c$s.

The next exercise shows that even $1$-NCAs cannot be simulated by DCAs with any constant number of counters.  Thus unlike finite-state automata, for counter automata, adding nondeterminism provably increases their computational power.

\begin{exercise}
\label{ex:pal-ca}
Recall that $\Lpal$ is the language of palindromes over a two-symbol alphabet.  Show that 
\begin{enumerate}
\item $\Lpal$ is not a deterministic constant-counter language, but  
\item its complement $\overline{\Lpal}$ can be recognized by a $1$-NCA.
\end{enumerate}
Conclude that $1$-NCAs can recognize some languages that cannot be recognized by a $k$-DCA for any $k$.
\end{exercise}

\loesung{
(Sketch) Every word $w$ is its own equivalence class in $\Lpal$ (exercise).  This gives $2^t$ equivalence classes among words of length $t$, which is not $O(t^k)$ for any constant $k$.  Since the DCA languages are closed under complement, $\Lpal$ cannot be recognized by a $k$-DCA for any $k$.  

On the other hand, a 1-NCA can recognize $\overline{\Lpal}$ by guessing a $t$ such that the $t$th symbol differs from the $t$th-to-last symbol.  It increments the counter until it nondeterministically chooses to stop at $t$, and records $w_t$.  At some later time $s$ where $w_s \ne w_t$, it nondeterministically chooses to start counting down again.  If the counter is zero at the end of the word, it accepts.
}

At the moment, it's not clear how to prove that a language cannot be recognized by a $1$-NCA, let alone by a $k$-NCA.  Do you have any ideas?

\section{Stacks and Push-Down Automata}

Let's continue defining machines where a finite-state automaton has access to a data structure with an infinite, but structured, set of states.  One of the most well-known and natural data structures is a \emph{stack}.  At any given point in time, it contains a finite string written in some alphabet.  The user of this data structure---our finite automaton---is allowed to check to see whether the stack is empty, and to look at the top symbol if it isn't.  It can modify the stack by ``pushing'' a symbol on top of the stack, or ``popping'' the top symbol off of it.  

A stack works in ``last-in, first-out'' order.  Like a stack of plates, the top symbol is the one most recently pushed onto it.  It can store an infinite amount of information, but it can only be accessed in a limited way---in order to see symbols deep inside the stack, the user must pop off the symbols above them.

Sadly, for historical reasons a finite-state automaton connected to a stack is called a \emph{push-down automaton} or PDA, rather than a stack automaton (which is reserved for a fancier type of machine).  We can write the transition function of a deterministic PDA, or DPDA, as follows.  As before, $S$ denotes the DFA's state space and $A$ denotes the input alphabet, and now $\Gamma$ denotes the alphabet of symbols on the stack.
\[
\delta : S \times \left( \Gamma \cup \{ \textrm{empty} \} \right) \times A
\to S \times \left( \{ (\textrm{push}, \gamma) \mid \gamma \in \Gamma \} \cup \{ \textrm{pop} , \textrm{do nothing} \} \right) \, . 
\]
This takes the DFA's current state, the top symbol of the stack or the fact that it is empty, and an input symbol.  It returns a new state and an action to perform on the stack.  

We start in an initial state $\sinit \in S$ and with an empty stack.  Once again, we accept a word if the final state is in some subset $\Syes$.  However, we will often want the criterion for acceptance to depend on the stack being empty, so we define $\Syes$ as a subset of $S \times \left( \Gamma \cup \{ \textrm{empty} \} \right)$.  (This differs minutely from the definition you may find in other books, but it avoids some technical annoyances.)  We denote the language recognized by a DPDA $P$ as $L(P)$.

\begin{figure}
\begin{center}
\includegraphics[width=1.6in]{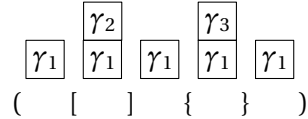}
\end{center}
\caption{A deterministic push-down automaton with three stack symbols, $\Gamma=\{\gamma_1,\gamma_3,\gamma_3\}$, recognizing the word $([]\{\})$ in the language $D_3$ of properly nested and matched strings with three types of brackets.  The stack is empty at the beginning and end of the process.}
\label{fig:dyck}
\end{figure}

The canonical languages recognized by push-down automata are the bracket languages $D_k$ of Exercise~\ref{ex:dyck}, as illustrated in Figure~\ref{fig:dyck}.  Each symbol has a matching partner: the DPDA pushes a stack symbol when it sees a left bracket, and pops when it sees a right bracket.  There are stack symbols $\gamma_1, \ldots, \gamma_k$ for each type of bracket, and we use these symbols to check that the type of each bracket matches that of its partner.  When the stack is empty, all the brackets have been matched, and we can accept.  Due to the last-in, first-out character of the stack, these partners must be nested as in $([])$, rather than crossing as in $([)]$.  If we push $\gamma_1$ and then $\gamma_2$, we have to pop $\gamma_2$ before we can pop $\gamma_1$.  

As you already know if you did Exercise~\ref{ex:dyck}, the state space of a PDA is shaped like a tree.  If $|\Gamma|=k$, each node in this tree has $k$ children.  Pushing the symbol $\gamma_i$ corresponds to moving to your $i$th child, and popping corresponds to moving up to your parent.  At the root of the tree, the stack is empty.

In the case of the bracket language, the DFA has a single state $s$, and $\Syes=\{(s,\textrm{empty})\}$.  With additional states, we can impose regular-language-like constraints, such as demanding that we are never inside more than one layer of curly brackets.  More generally:

\begin{exercise}
Show that the DPDA languages are closed under intersection with regular languages.  That is, if $L$ can be recognized by a DPDA and $R$ is regular, then $L \cap R$ can be recognized by a DPDA.
\end{exercise}

We can define NPDAs in analogy to our other nondeterministic machines, allowing $\delta$ to be multi-valued and accepting if a computation path exists that ends in an accepting state.  

\begin{exercise}
\label{ex:closed-int-reg}
Show that the NPDA languages are also closed under intersection with regular languages.
\end{exercise}

As for counter machines, we will see below that NPDAs are strictly more powerful than DPDAs.  For now, note that NPDAs can recognize palindromes:

\begin{exercise}
Show that $\Lpal$ can be recognized by an NPDA.  Do you think it can be recognized by a DPDA?  How could you change the definition of $\Lpal$ to make it easier for a DPDA?
\end{exercise}

As we will see below, PDAs are incomparable with counter automata---each type of machine can recognize some languages that the other cannot.  For now, consider the following exercises:

\begin{exercise}
Show that a $1$-DCA can be simulated by a DPDA, and similarly for $1$-NCAs and NPDAs.  Do you think this is true for two-counter automata as well?
\end{exercise}

\begin{exercise}
Is $D_2$ a deterministic constant-counter language, i.e., is it recognizable by a $k$-DCA for any constant $k$?
\end{exercise}

As for all our deterministic and nondeterministic machines, the DPDA languages are closed under complement, and the NPDA languages are closed under union.  What other closure properties do you think these classes have?  Which do you think they lack?  

And how might we prove that a language cannot be recognized by a DPDA?  Unlike counter automata, we can't get anywhere by counting equivalence classes.  In $t$ steps, a PDA can reach $|\Gamma|^t$ different states.  If $|\Gamma|=|A|$, this is enough to distinguish any pair of words of length $t$ from each other, and this is what happens in $\Lpal$.  However, as we will see in the next two sections, DPDA and NPDA languages obey a kind of Pumping Lemma due to their nested nature, and we can use this to prove that some languages are beyond their ken.


\section{Context-Free Grammars}

The job of an automaton is to \emph{recognize} a language---to receive a word as input, and answer the yes-or-no question of whether it is in the language.  But we can also ask what kind of process we need to \emph{generate} a language.  In terms of human speech, recognition corresponds to listening to a sentence and deciding whether or not it is grammatical, while generation corresponds to making up, and speaking, our own grammatical sentences.

A \emph{context-free grammar} is a model considered by the Chomsky school of formal linguistics.  The idea is that sentences are recursively generated from internal mental symbols through a series of production rules.  For instance, if $S$, $N$, $V$, and $A$ correspond to sentences, nouns, verbs, and adjectives, we might have rules such as $S \to NVN$, $N \to AN$, and so on, and finally rules that replace these symbols with actual words.  We call these rules context-free because they can be applied regardless of the context of the symbol on the left-hand side, i.e., independent of the neighboring symbols.  Each sentence corresponds to a \emph{parse tree} as shown in Figure~\ref{fig:parse}, and \emph{parsing} the sentence allows us to understand what the speaker has in mind.

\begin{figure}
\begin{center}
\includegraphics[width=2in]{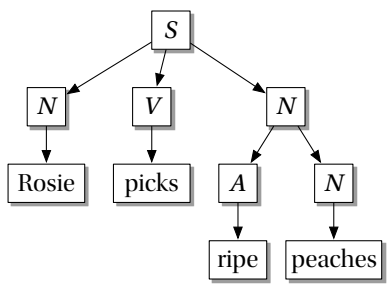}
\end{center}
\caption{The parse tree for a small, but delightful, English sentence.}
\label{fig:parse}
\end{figure}

Formally, a context-free grammar $G$ consists of a finite alphabet $V$ of \emph{variable} symbols, an initial symbol $S \in V$, a finite alphabet $T$ of \emph{terminal} symbols in which the final word must be written, and a finite set $R$ of \emph{production rules} that let us replace a single variable symbol with a string composed of variables and terminals:
\[ 
\mbox{$A \to s$ where $A \in V$ and $s \in (V \cup T)^*$} \, . 
\]
If $A \in V$ and $w \in T^*$, we write $A \leadsto w$ if there is a \emph{derivation}, or sequence of production rules, that generates $w$ from $A$.  
We say that $G$ \emph{generates} the language $L(G) \subseteq T^*$, consisting of all terminal words $w$ such that $S \leadsto w$.

For instance, the following grammar has $V=\{S\}$, $T=\{ (, ), [, ] \}$, and generates the language $D_2$ with two types of brackets:
\begin{equation}
\label{eq:d2}
S \to (S)S, [S]S, \eps \, .
\end{equation}
This shorthand means that $R$ consists of three production rules.  We can replace $S$ with $(S)S$ or $[S]S$, or we can erase it by replacing it with the empty word $\eps$.  Using these rules we can produce, for instance, the word $([])[]$:
\[
S \to (S)S \to ([S]S)S \to ([S]S)[S]S \leadsto ([])[] \, ,
\]
where in the last $\leadsto$ we applied the rule $S \to \eps$ four times.

We say that a language $L$ is \emph{context-free} if it can be generated by some context-free grammar.  Clearly the bracket languages $D_k$ are context-free.  So is the language $\{ a^n b^n \mid n \ge 0 \}$, since we can generate it with
\[
S \to aSb, \eps \, . 
\]
Similarly, the language of palindromes $\Lpal$ is context-free, using
\[
S \to aSa, bSb, a, b, \eps \, . 
\]
Since all these examples have just a single variable, consider the language $L = \{ a^m b^n \mid m > n \}$.  We can generate it with the variables $V=\{S,U\}$ and the rules
\begin{align*}
S &\to aS, aU \\
U &\to aUb, \eps \, . 
\end{align*}

\begin{exercise}
Give a context-free grammar for $L_{a=b}$, the set of words with an equal number of $a$'s and $b$'s, appearing in any order.
\end{exercise}

\loesung{
There are many solutions, but here is the simplest:
\[
S \to aSbS, bSaS, \eps \, . 
\]
To see that this works, suppose that a word $w \in L_{a=b}$ begins with an $a$.  Then there must be a $b$ somewhere later in the word that matches that $a$ in the sense that $w=axby$ for some $x, y \in L_{a=b}$.  If we use the rule $S \to aSbS$, then we can use the two $S$'s to generate $x$ and $y$.  The case where $w$ begins with a $b$ is similar.
}

A few closure properties follow almost immediately from the definition of context-free grammars:
\begin{exercise}
Show that the context-free languages are closed under union, concatenation, reversal, and Kleene star.
\end{exercise}

\loesung{
Let $L_1$ and $L_2$ be context-free languages.  They each have a grammar, with initial symbols $S_1$ and $S_2$.  To generate $L_1 \cup L_2$, create a new grammar with initial symbol $S$ and the rules $S \to S_1, S_2$.  To generate $L_1 L_2$, use $S \to S_1 S_2$.  

To generate $L^R$, just modify the grammar for $L$ by reversing the strings on the right-hand side of each rule.

To generate the Kleene star $L^*$ of a language $L$ generated by a grammar with initial symbol $S$, just add the rules $S \to SS$ and $S \to \eps$.  By repeatedly applying $S \to SS$, we can generate $S^t$ for any $t$, and so the concatenation of any $t$ words in $L$.  The rule $S \to \eps$ lets us generate the empty word as well.
}

\noindent
But what about intersection and complement?

Any regular language is context-free, although the above examples show the converse is not the case.  Call a context-free language \emph{regular} if each of its production rules either lets us change one variable to another while leaving a single terminal behind it,
\begin{align*}
\mbox{$A \to tB$ where $t \in T$ and $B \in V$} \, ,
\end{align*}
or lets us end the production process by erasing the variable, $A \to \eps$.  Then we have the following theorem:
\begin{theorem}
A language is regular if and only if it can be generated by a regular grammar.
\end{theorem}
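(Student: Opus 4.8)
The plan is to set up a tight correspondence between the states of a finite automaton and the variables of a regular grammar, and to run it in both directions. The single technical fact underlying everything is an inductive claim matching derivations in the grammar with computation paths in the automaton; once that is in hand, both implications follow by specializing to the start symbol.

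For the forward direction, suppose $L$ is regular and let $M$ be a DFA recognizing it, with states $S$, initial state $\sinit$, transition function $\delta$, and accepting set $\Syes$. I build a grammar $G$ with variable alphabet $V = S$ and initial symbol $\sinit$, including the rule $s \to t\,\delta(s,t)$ for every state $s$ and terminal $t$, and the rule $s \to \eps$ for every $s \in \Syes$. These rules have exactly the two permitted shapes, so $G$ is a regular grammar. I then prove, by induction on $|w|$, that for every variable $A$ and every word $w \in T^*$ we have $A \leadsto w$ if and only if $\delta^*(A,w) \in \Syes$. Specializing to $A = \sinit$ gives $L(G) = L(M) = L$.

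For the converse, let $G$ be a regular grammar with variables $V$, initial symbol $S$, and production set $R$ containing only rules of the forms $A \to tB$ and $A \to \eps$. I define an NFA $M$ with state set $V$, initial state $S$, a transition $B \in \delta(A,t)$ for each rule $A \to tB$, and accepting set $\Syes = \{A \mid (A \to \eps) \in R\}$. An NFA is needed here rather than a DFA, since $G$ may contain several rules $A \to tB$ and $A \to tB'$ with the same terminal. The analogue of the inductive claim above---now stated as $A \leadsto w$ if and only if some path of $M$ from $A$ reading $w$ ends in $\Syes$---gives $L(M) = L(G)$, and Theorem~\ref{thm:nfa-dfa} then tells us that $L(G)$ is regular.

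I expect the only real care to lie in the inductive claim itself. The base case $w = \eps$ is just the equivalence between the presence of the rule $A \to \eps$ and the membership $A \in \Syes$, and it is worth stating separately rather than folding into the general step, since this is where an off-by-one slip with the empty word would hide. The inductive step relies on the right-linear shape of the rules: a word $w = tw'$ cannot be derived from $A$ by first erasing $A$, so the first production must be some $A \to tB$, after which the remainder $w'$ is derived from the single variable $B$. This is exactly what lets a derivation decompose as one transition followed by a shorter derivation, matching the recursion $\delta^*(A, tw') = \delta^*(\delta(A,t), w')$ in the deterministic case and its nondeterministic analogue in the converse.
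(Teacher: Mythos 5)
Your proposal is correct and follows exactly the route the paper intends: the paper leaves this theorem as an exercise whose hint (``What do the variables correspond to?'') points precisely to your identification of grammar variables with automaton states, building the rules $s \to t\,\delta(s,t)$ and $s \to \eps$ for $s \in \Syes$ in one direction and reading a rule $A \to tB$ as a transition $B \in \delta(A,t)$ in the other. Your two points of care---passing through an NFA and invoking Theorem~\ref{thm:nfa-dfa} because the grammar may be nondeterministic, and handling the base case $w = \eps$ via the correspondence between rules $A \to \eps$ and accepting states---are exactly the right ones, and the inductive claim matching derivations with computation paths goes through as you describe.
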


\begin{exercise}
Prove this theorem.  What do the variables correspond to?
\end{exercise}

It's sometimes convenient to reduce a context-free grammar to a particular form.  Consider these exercises:

\begin{exercise}
A grammar is in \emph{Chomsky normal form} if the right-hand side of every production rule is either a terminal or a pair of variables:
\[
\mbox{$A \to BC$ where $B, C \in V$, or $A \to t$ where $t \in T$} \, ,
\]
unless the only rule is $S \to \eps$.  Show that any context-free grammar can be converted to one in Chomsky normal form.
\end{exercise}

\begin{exercise}
A grammar is in \emph{Greibach normal form} if the right-hand side of every production rule consists of a terminal followed by a string of variables:
\[
\mbox{$A \to tw$ where $t \in T, w \in V^*$} \, ,
\]
unless the only rule is $S \to \eps$.  Show that any context-free grammar can be converted to one in Greibach normal form.
\end{exercise}

All right, enough pussyfooting around.  The context-free languages are precisely those that can be recognized by nondeterministic push-down automaton.  We will prove this in two pieces.

\begin{theorem}
\label{thm:npda-cfl}
Any language recognized by an NPDA is context-free.
\end{theorem}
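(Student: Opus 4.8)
The plan is to read off a context-free grammar $G$ directly from the NPDA $P$, using variables that record what $P$ can accomplish between two states while its stack rises and then falls back. First I would normalize $P$ so that its acceptance condition becomes clean. By pushing a fresh bottom-of-stack marker before reading any input, and by adding transitions that let $P$ nondeterministically pop its whole stack once it sits in a configuration that was accepting under the original $\Syes$, I can arrange that $P$ accepts $w$ if and only if it can read $w$ and arrive in a single designated state $\saccept$ with an \emph{empty} stack. I would also massage the moves so that each one either pushes exactly one stack symbol or pops exactly one, simulating the ``do nothing'' moves by a push immediately followed by the matching pop (using $\eps$-moves where convenient). None of this changes the language, so it suffices to handle a $P$ of this restricted form.

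The core of the construction is the grammar itself. For every ordered pair of states $p,q \in S$ I introduce a variable $A_{pq}$, whose intended meaning is that $A_{pq} \leadsto w$ exactly when $P$ can pass from state $p$ with an empty stack to state $q$ with an empty stack while consuming $w$, never popping below its starting height. The start variable is then $A_{\sinit\saccept}$. The production rules encode the three shapes such a ``balanced'' computation can have:
\[
A_{pp} \to \eps \, , \qquad A_{pq} \to A_{pr}\,A_{rq} \, , \qquad A_{pq} \to a\,A_{rs}\,b \, ,
\]
where $p,q,r,s$ range over all states, and the last rule is included whenever $P$ has a move from $p$ on input $a$ that pushes some $\gamma \in \Gamma$ and enters state $r$, together with a move from $s$ on input $b$ that, seeing $\gamma$ on top, pops it and enters state $q$ (here $a$ or $b$ may be $\eps$ if $\eps$-moves are used). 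The middle rule splits a balanced computation at an intermediate moment when the stack is momentarily empty; the third rule matches the very first push against the pop that eventually removes it.

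Correctness then follows from two inductions. The forward direction, that $A_{pq} \leadsto w$ implies $P$ has a balanced computation from $p$ to $q$ reading $w$, is a routine induction on the length of the derivation, gluing together the subcomputations supplied by the variables on the right-hand side. The backward direction is the real obstacle: given an arbitrary balanced computation of $P$ from $p$ to $q$, I must show $A_{pq} \leadsto w$, by induction on the number of steps. The key is to track the stack-height profile. If the stack returns to its starting height at some step strictly before the end, say entering state $r$ there, I split the computation in two and apply $A_{pq} \to A_{pr}\,A_{rq}$; otherwise the stack stays strictly above its initial level throughout the interior, which forces the first move to be a push of some $\gamma$ and the final move to be the pop that removes that same $\gamma$, and everything in between is itself a balanced computation, so $A_{pq} \to a\,A_{rs}\,b$ applies. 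Making this dichotomy precise, in particular verifying that the symbol popped at the end is exactly the one pushed at the start, is where the argument needs the most care; once it is in place, $L(G) = L(P)$ follows by taking $p = \sinit$, $q = \saccept$, and invoking the normalized acceptance condition.
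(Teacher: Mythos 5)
Your construction is correct and takes essentially the same route as the paper's proof: both introduce variables indexed by pairs of states whose derivations correspond exactly to stack-neutral computation segments (your $A_{pq}$, the paper's $(s_1 \leadsto s_2)$), both match the first push against the pop that eventually removes it, and both verify correctness by induction on the stack-height profile, with acceptance normalized to empty-stack form. If anything, your version is the more careful one: your explicit splitting rule $A_{pq} \to A_{pr}\,A_{rq}$, combined with the push/pop-only normalization, handles balanced computations whose initial push is matched strictly before the last step, a case that the paper's two unpacking rules (push--balanced--pop, and neutral-step-then-continue) do not generate on their own without such a concatenation rule.
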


\begin{proof}
It's convenient to demand that NPDA only accepts when the stack is empty.  If we are left with some residual symbols on the stack, we can always use a series of $\eps$-transitions to pop them off.  Alternately, the NPDA can simply choose, nondeterministically, not to push these symbols on the stack in the first place.

We will define a context-free grammar that generates accepting paths for the NPDA, and then converts these paths to input words that induce them.  This grammar has two kinds of variables.  Variables of the form 
\[
(s_1 \leadsto s_2)
\]
denote a computation path that starts in the state $s_1$ and ends in the state $s_2$, where $s_1, s_2 \in S$, and that leaves the stack in the same state it started in.  Those of the form 
\begin{equation}
\label{eq:var-steps}
(s_1 \stackrel[a]{\textrm{push}\ \gamma}{\longrightarrow} s_2) \, , \; 
(s_1 \stackrel[a]{\textrm{pop}\ \gamma}{\longrightarrow} s_2) \, , \;
(s_1 \stackrel[a]{}{\longrightarrow} s_2) 
\end{equation}
denote single steps of computation that read a symbol $a$, change the state from $s_1$ to $s_2$, and possibly perform an action on the stack, as allowed by the NPDA's transition function.  If you want to allow $\eps$-transitions, we can include variables with $\eps$ in place of $a$.

The initial symbol represents an accepting computation path, 
\[
S = (\sinit \leadsto \saccept) \, ,
\]
where for simplicity we assume that there is a single accepting state $\saccept$.  Our goal is to ``unpack'' this path into a series of single steps.  We can do this with the following kinds of rules:
\begin{align*}
(s_1 \leadsto s_2) &\to 
(s_1 \stackrel[a]{\textrm{push}\ \gamma}{\longrightarrow} s_3) \,(s_3 \leadsto s_4) \,(s_4 \stackrel[a]{\textrm{pop}\ \gamma}{\longrightarrow} s_2) \\
(s_1 \leadsto s_2) &\to
(s_1 \stackrel[a]{}{\longrightarrow} s_3) \,(s_3 \leadsto s_2) \, .
\end{align*}
Each of these rules refines the path.  The first rule starts by pushing $\gamma$ and ends by popping it, and the second one starts with a move that changes the state but not the stack.  In addition, a path that doesn't change the state can be replaced with no move at all:
\[
(s \leadsto s) \to \eps \, .
\]
Repeatedly applying these rules lets us generate any accepting computation path, i.e., any string of variables of the form~\eqref{eq:var-steps} that describe a series of state transitions, and pushes and pops on the stack, that are consistent with the NPDA's transition function.
 
Finally, our terminals are the input symbols $a \in A$.  We convert an accepting computation path to a string of terminals according to the rules
\[
(s_1 \stackrel[a]{\textrm{push}\ \gamma}{\longrightarrow} s_2) \, , \; 
(s_1 \stackrel[a]{\textrm{pop}\ \gamma}{\longrightarrow} s_2) \, , \;
(s_1 \stackrel[a]{}{\longrightarrow} s_2) \to a \, ,
\]
so that this grammar generates exactly the words accepted by the NPDA.
\end{proof}

Now for the converse:

\begin{theorem}
Any context-free language can be recognized by some NPDA.
\end{theorem}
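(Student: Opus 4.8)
The plan is to build an NPDA $P$ that simulates a leftmost derivation of a context-free grammar $G$ for $L$, using the stack to hold the as-yet-unexpanded tail of the current sentential form. The input alphabet is the set of terminals $T$, and the stack alphabet is $\Gamma = V \cup T$, so the stack can hold both variables and terminals. The machine will have essentially a single working state $q$ (plus a few auxiliary states explained below), and it will use nondeterminism to guess which production rule to apply at each step. I would begin by pushing the start variable $S$ onto the empty stack via an $\eps$-transition from $\sinit$ to $q$.

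The main loop, carried out in state $q$, is as follows. If the symbol on top of the stack is a variable $A \in V$, then for each production rule $A \to \alpha$ of $G$ I provide an $\eps$-transition (reading no input) that pops $A$ and pushes the string $\alpha$ with its leftmost symbol ending up on top. If instead the top of the stack is a terminal $t \in T$, the machine reads the next input symbol; if it equals $t$ it pops, and otherwise that computation path simply dies. I set the accepting set to $\Syes = \{(q, \textrm{empty})\}$, so $P$ accepts exactly when it has consumed all of the input and emptied its stack. The governing invariant, which I would state and then verify, is that at any moment the stack read from top to bottom equals the suffix of the current leftmost sentential form that has not yet been matched against the consumed prefix of the input.

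The main obstacle is a mismatch with our formal definition of a PDA: the transition function may only push a single symbol, pop, or do nothing on each step, whereas replacing a variable $A$ by a right-hand side $\alpha = X_1 \cdots X_m$ asks us to push $m$ symbols at once. I would resolve this by implementing each such replacement as a short burst of $\eps$-transitions that pops $A$ and then pushes $X_m, X_{m-1}, \ldots, X_1$ one at a time, threading the control through a fixed collection of auxiliary states. Since the grammar $G$ is fixed, its longest right-hand side has some constant length, so only finitely many auxiliary states are required and $P$ remains a legitimate finite-control NPDA. If one prefers uniformity, converting $G$ to Greibach or Chomsky normal form first, using the earlier exercises, bounds every right-hand side and makes these bursts especially simple.

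Finally I would prove $L(P) = L(G)$ by exhibiting a correspondence between leftmost derivations in $G$ and accepting computations of $P$. In the forward direction, a leftmost derivation $S \leadsto w$ is played out move by move: each application of a rule $A \to \alpha$ becomes the corresponding $\eps$-replacement, and each terminal that reaches the top of the stack is matched against the next input symbol, so all of $w$ is consumed and the stack empties. In the reverse direction, any accepting computation has its variable-replacement steps read off, in order, as a leftmost derivation of the input it consumed, with the invariant above guaranteeing that these match up. Both halves are short inductions, on the length of the derivation and of the computation respectively, and together they show that $w \in L(G)$ if and only if $P$ accepts $w$.
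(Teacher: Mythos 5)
Your proposal is correct and follows essentially the same route as the paper's proof: the stack holds the unexpanded portion of the sentential form (the paper's ``buds and leaves'' of the growing parse tree), productions are guessed nondeterministically via $\eps$-moves, terminals on top of the stack are matched against the input, and acceptance occurs on empty stack. Your treatment of the multi-symbol push via a burst of $\eps$-transitions through auxiliary states is exactly the simulation the paper delegates to the reader, and your leftmost-derivation invariant makes the correctness argument the paper sketches fully explicit.
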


\begin{proof}
For convenience we will let our NPDA make $\eps$-transitions.  We will also let it pop the top symbol and push any finite string in $\Gamma^*$ onto the stack, all in a single step.  We can simulate such automata with our previous ones by making a series of $\eps$-transitions, or by increasing the size of the stack alphabet (exercise for the reader).

\begin{figure}
\begin{center}
\includegraphics[width=5.5in]{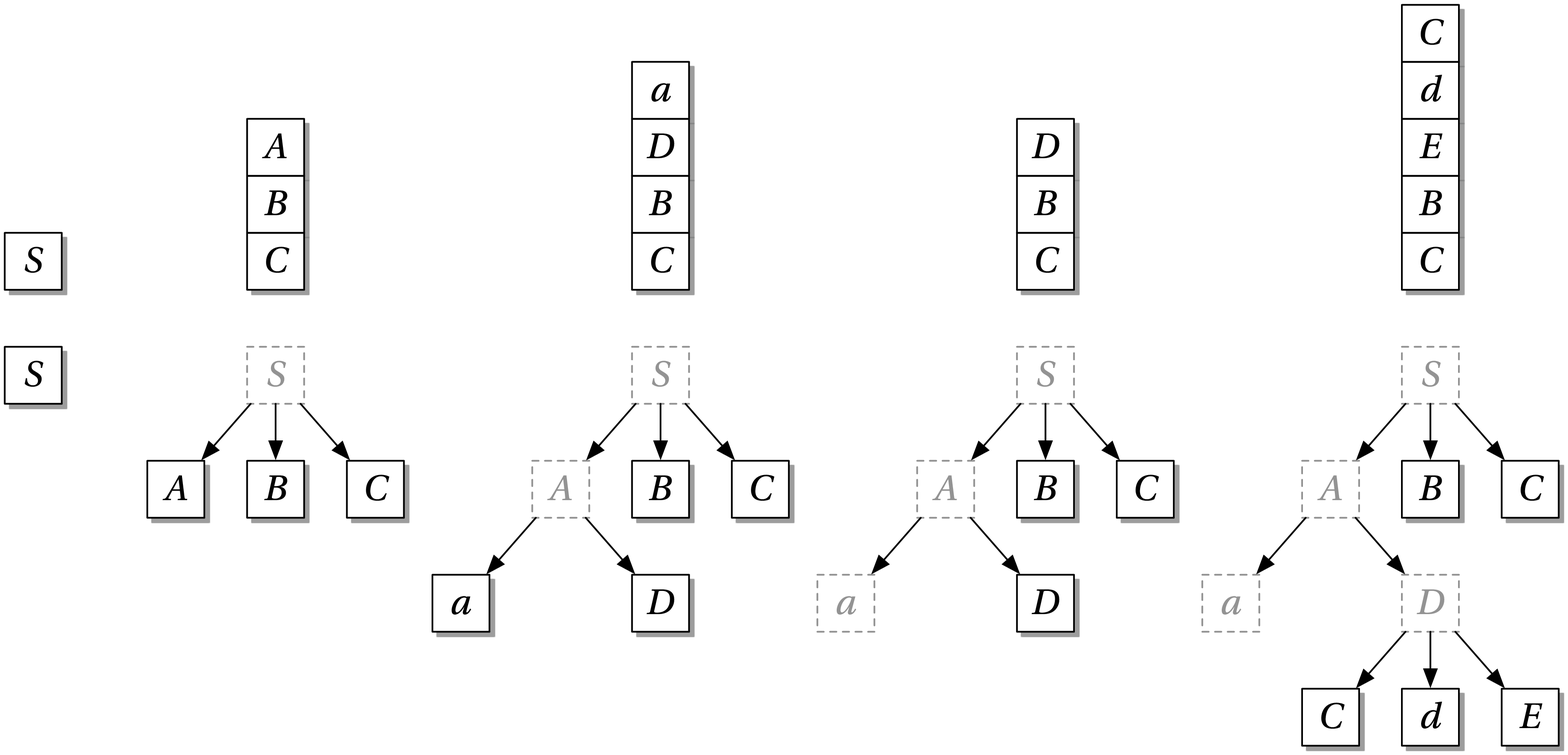}
\end{center}
\caption{An NPDA generating a parse tree.  The stack contains a string of currently active symbols, variables waiting to branch and terminals waiting to be matched with the input.  In the first step, the NPDA applies the rule $S \to ABC$.  In the second step, it applies $S \to aD$.  In the third step, it reads the first symbol of the input, and checks that it is an $a$.  In the third step, it applies the rule $D \to CdE$, and so on.  The gray symbols are no longer active, i.e., they are ``dead wood'' where the parse tree is no longer growing.}
\label{fig:grow}
\end{figure}

Imagine growing the parse tree, starting at the root.  At each point in time, there are ``buds,'' or variables, that can give birth to further branches, and ``leaves,'' or terminals, which grow no further.  The NPDA keeps track of these buds and leaves using the stack alphabet $\Gamma = V \cup T$, storing them so that the leftmost one is on top of the stack as in Figure~\ref{fig:grow}.  At each step, if the leftmost symbol is a terminal, the NPDA pops it, reads the next input symbol, and checks that they match.  But if the leftmost symbol is a variable, the NPDA performs an $\eps$-move that applies one of the production rules, popping the variable off the stack and replacing it with its children.  

Thus on any given step, the stack contains a string of variables waiting to branch further, and terminals waiting to matched to the input.  When the stack is empty, the entire parse tree has been generated and every terminal has been matched, and at that point the NPDA accepts.
\end{proof}

As with regular languages, each definition of the context-free languages makes certain things easier to prove.  For instance, it seems much easier to prove that they are closed under intersection with regular languages (Exercise~\ref{ex:closed-int-reg}) by thinking about NPDAs rather than about grammars.  

How hard is it to tell whether a given string $w$ can be generated by a given context-free grammar?  We will see in Problem~3.30 that we can do this in $\bigo(|w|^2 |V|)$ time using dynamic programming.  Do you see why?  If we try to solve the problem recursively by breaking it into subproblems, how many distinct subproblems will we encounter?

\begin{exercise}
Show that any context-free language over a one-symbol alphabet $\{a\}$ is regular.
\end{exercise}


\section{The Pumping Lemma for Context-Free Languages}

How can we prove that a language is not context-free?  As we pointed out above, we can't get anywhere by counting equivalence classes---a stack has enough freedom to store the entire word it has seen so far.  The question is what it can do with this information, given that it can only look deep in the stack by popping the other symbols first.  Intuitively, this means that context-free languages can only check for nested relationships, like those in the bracket languages, and not more complicated webs of dependence.

Happily, context-free grammars possess a kind of Pumping Lemma, much like the one we proved for regular languages in Section~\ref{sec:pumping}, and with a similar proof.  Here it is:
\begin{lemma}
\label{lem:cfl-pumping}
Suppose $L$ is a context-free language.  Then there is a constant $p$ such that any $w \in L$ with $|w| > p$ can be written as a concatenation of five strings, $w=uvxyz$, where
\begin{enumerate}
\item $|vy| > 0$, i.e., at least one of $v, y$ is nonempty, and
\item for all integers $t \ge 0$, $uv^txy^tz \in L$. 
\end{enumerate}
\end{lemma}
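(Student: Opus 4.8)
The plan is to analyze parse trees in \emph{Chomsky normal form}. First I would invoke the earlier exercise that converts any context-free grammar $G$ for $L$ into an equivalent grammar in Chomsky normal form, so that every production is either $A \to BC$ with $B, C \in V$ or $A \to t$ with $t \in T$ (ignoring the degenerate case $S \to \eps$, which cannot arise once $|w| > p \ge 1$). The payoff of this normal form is that every parse tree is essentially binary: each variable node either has two variable children or a single terminal child. Consequently, if every root-to-leaf path passes through at most $d$ variables, the tree has at most $2^{d-1}$ leaves, and hence the generated word has length at most $2^{d-1}$.

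This suggests setting $p = 2^{|V|}$. If $w \in L$ has $|w| > p$, then by the bound above its parse tree cannot have all root-to-leaf paths confined to at most $|V|$ variables, so some path carries more than $|V|$ variables. By the pigeonhole principle, some variable $A$ occurs at least twice on this path. Fix two such occurrences, and let the higher one (closer to the root) derive a substring $vxy$ of $w$, where $x$ is the substring derived by the lower occurrence of $A$; writing $u$ and $z$ for the portions of $w$ to the left and right of the higher subtree, we obtain $w = uvxyz$ together with the derivations $A \leadsto vxy$ and $A \leadsto x$.

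Pumping is then immediate. Splicing the derivation carrying the higher $A$ down to the lower $A$ into itself $t$ times and finishing with $A \leadsto x$ gives $A \leadsto v^t x y^t$ for every $t \ge 0$; substituting this into the rest of the derivation, in which $S$ generates $u$, this copy of $A$, and $z$, yields $u v^t x y^t z \in L$, which is condition~2. The case $t = 0$ splices in the lower subtree directly, and $t = 1$ recovers $w$ itself.

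The step I expect to require the most care is condition~1, that $|vy| > 0$, and here Chomsky normal form does the essential work. The higher occurrence of $A$ has two variable children, and the lower occurrence lies in the subtree of exactly one of them; the \emph{other} child's subtree is therefore disjoint from the part deriving $x$ and contributes entirely to $v$ or to $y$. Since in Chomsky normal form every variable derives at least one terminal, that sibling subtree yields a nonempty string, so at least one of $v, y$ is nonempty. I would also remark that by selecting the repeated variable from among the lowest $|V|+1$ variables on the path one additionally secures $|vxy| \le p$, the stronger form of the lemma, although this extra bound is not needed for the statement as given.
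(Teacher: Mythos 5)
Your proof is correct, and it follows the same skeleton as the paper's---bound the yield of a shallow parse tree, conclude that a word longer than $p$ forces a root-to-leaf path with more than $|V|$ variables, pigeonhole a repeated variable $A$, and splice the derivation $A \leadsto vAy$ into itself $t$ times---but it differs in the preprocessing and, more importantly, in how condition~1 is secured. The paper works with an arbitrary grammar, taking $p = k^{|V|}$ where $k$ is the maximum length of a right-hand side, and disposes of $|vy| > 0$ by asserting that one may assume no variable other than $S$ derives $\eps$, deferring that conversion to an exercise. Your route through Chomsky normal form costs a grammar conversion (itself an exercise in the paper) and changes the constant to $2^{|V|}$, but it buys an airtight proof of $|vy| > 0$: since the upper occurrence of $A$ lies strictly above the lower one, it must use a rule $A \to BC$, and the child not containing the lower $A$ roots a subtree whose yield is nonempty (there are no $\eps$-rules, and every subtree in a parse tree terminates in terminal leaves) and lies wholly inside $v$ or inside $y$. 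This is a genuine gain, because the paper's stated fix is not quite sufficient on its own: even with $\eps$-productions removed, unit-production cycles such as $A \to B$, $B \to A$ can still produce a repeated occurrence of $A$ with $v = y = \eps$, whereas Chomsky normal form excludes unit productions as well, so your sibling-subtree argument closes that gap. Your closing observation that choosing the repeated variable among the lowest $|V|+1$ variables on the path also yields $|vxy| \le p$ matches the strengthening requested in Exercise~\ref{ex:vxy}.
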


\begin{proof}
Consider a context-free grammar that generates $L$, and let $k$ be the largest number of symbols that any production rule turns a variable into.  Any parse tree of depth $d$ can produce a word of length at most $k^d$.  Thus if $|w| > k^{|V|}$, the parse tree must contain at least one path from the root of length $|V|+1$, containing $|V|+1$ variables starting with $S$.  By the pigeonhole principle, some variable $A$ must occur twice along this path.  

But as Figure~\ref{fig:cfl-pumping} shows, this means that 
\[
\mbox{$S \leadsto uAz$ \, , \; $A \leadsto vAy$ \, , and $A \leadsto x$} \, ,
\]
where $u, v, x, y, z$ are terminal words such that $uvxyz=w$.  Repeating the derivation $A \leadsto vAy$ $t$ times gives $S \leadsto uv^txy^tz$, showing that all these words are in $L$.

We can ensure that at least one of $v$ or $y$ is nonempty if no variable other than $S$ can produce the empty string.  Any context-free grammar can be converted to one with this property (exercise for the reader).  Finally, setting $p=k^{|V|}$ completes the proof.
\end{proof}

\begin{figure}
\begin{center}
\includegraphics[width=1.5in]{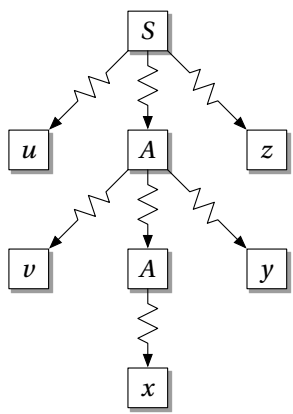}
\end{center}
\caption{The proof of the Pumping Lemma for context-free languages.  Any sufficiently deep parse tree has a path on which some variable $A$ appears twice, but this means that $A$ can generate $vAy$ for some terminal words $v, y$.  Repeating this process creates additional copies of $v$ and $y$ on the left and right, producing $uv^txy^tz$ for any $t \ge 0$.}
\label{fig:cfl-pumping}
\end{figure}

\begin{exercise}
\label{ex:vxy}
Strengthen Lemma~\ref{lem:cfl-pumping} to add the requirement that $|vxy| \le p$.  Hint: consider the last $|V|+1$ variables on the longest path in the parse tree.
\end{exercise}

Let's use the Pumping Lemma to prove that the language
\[
L_{a=b=c} = \left\{ w \in \{a,b,c\}^* \mid \#_a(w) = \#_b(w) = \#_c(w) \right\} 
\]
is not context-free.  (Note that it can be recognized by a two-counter automaton.)  We start by using the fact that the context-free languages are closed under intersection with regular languages.  Specifically, if $L_{a=b=c}$ is context-free, then so is its subset where the letters are sorted into blocks, 
\[
L' = L_{a=b=c} \cap a^* b^* c^* = \{ a^n b^n c^n \mid n \ge 0 \} \, . 
\]
If $L'$ is context-free, there is a constant $p$ such that any word of length greater than $p$ can be pumped as in Lemma~\ref{lem:cfl-pumping}.  But consider $w=a^p b^p c^p$.  If $w=uvxyz$ and $uv^2xy^2z$ is also in $L'$, then $v$ and $y$ must each lie inside one of the three blocks.  But there is no way to keep all three blocks balanced as we repeat $v$ and $y$.  We can keep the number of $a$s equal to the number of $b$s, and so on, but we can't keep all three numbers equal.  

Thus $L'$ is not context-free, and neither is $L_{a=b=c}$.  Context-free languages can only handle pairs of partners, not m\'enages \`a trois.  This example also lets us answer our questions about closure properties of context-free languages under Boolean operations:

\begin{exercise}
Show that the context-free languages are not closed under intersection.  Hint: show that $L'$ is the intersection of two context-free languages.   
\end{exercise}

\noindent
Since we have closure under union, if we had closure under complement then de Morgan's law would imply closure under intersection.  Thus the context-free languages are not closed under complement either.  For a direct proof of this, consider the following.

\begin{exercise}
Show that $\overline{L'}$ is context-free.
\end{exercise}

\noindent
On the other hand, the DPDA languages \emph{are} closed under complement, so\ldots

\begin{exercise}
Show that the DPDA languages are not closed under intersection or union.  
\end{exercise}

The next exercises illustrate that context-free languages can't handle relationships that cross each other, rather than nesting inside each other:

\begin{exercise}
\label{ex:abab}
Show that the language
\[
L = \{ a^m b^n a^m b^n \mid m, n \ge 0 \}
\]
is not context-free.  You will need the added requirement given by Exercise~\ref{ex:vxy} that $|vxy| \le p$.
\end{exercise}

\begin{exercise}
Recall that $\Lcopy$ is the language of words repeated twice in the same order, $\big\{ ww \mid w \in \{a,b\}^* \big\}$.  Show that $\Lcopy$ is not context-free.  Hint: take the intersection with a regular language to reduce this to Exercise~\ref{ex:abab}.
\end{exercise}

The copy language would be easy to recognize if we had an automaton with a queue, rather than a stack---like a line of people at the movies, symbols push at the back and pop at the front, in a first-in, first-out order.  Real-time queue automata, which take a single step for each symbol and must accept reject immediately at the end of the input, are incomparable with PDAs---things that are easy for a stack can be hard for a queue, and vice versa.  See Cherubini et al., \emph{Theoretical Computer Science} 85:171--203 (1991).  

On the other hand, the complement of the copy language is context-free:

\begin{exercise}
Give a context-free grammar for the complement of the copy language.  Note that words in $\overline{\Lcopy}$ of even length can be written in the form
\[
\mbox{$w=uv$ where $|u|=|v|$ and there is an $1 \le i \le |u|$ such that $u_i \ne v_i$} \, ,
\]
where $u_i$ denotes the $i$th symbol of $u$.  This is tricky: try to avoid thinking of $w$ as composed of two equal halves.
\end{exercise}

\section{Ambiguous and Unambiguous Grammars} 

In addition to the distinction between DPDAs and NPDAs, there is another nice distinction we can make between context-free grammars---namely, whether they generate each word with a unique parse tree.  In the linguistic sense, this is the question of whether every sentence can be clearly understood, unlike ambiguous phrases like 
``the bright little girls' school'' 
where it isn't clear which nouns are modified by which adjectives.

\begin{exercise}
Show that the grammar $S \to (S)S, [S]S, \eps$ for $D_2$ is unambiguous.
\end{exercise}

Note that a given language can have both ambiguous and unambiguous grammars.  For instance, the grammar $S \to S(S)S, \eps$ also generates $D_2$, but does so ambiguously---the word $()()$ has two distinct parse trees.  We call a context-free language \emph{unambiguous} if it possesses an unambiguous grammar.  There are also \emph{inherently ambiguous} context-free languages, for which no unambiguous grammar exists.  One example is
\[
\left\{ a^i b^j c^k \mid \mbox{$i=j$ or $j=k$} \right\} \, .
\]
We will not prove this, but any grammar for this language generates words with $i=j=k$ with two distinct parse trees, since $b^j$ could be ``meant'' to match either $a^j$ or $c^j$.

Converting a DPDA into a context-free grammar as we did for NPDAs in Theorem~\ref{thm:npda-cfl} shows that any DPDA language is unambiguous.  However, there are also unambiguous languages which cannot be recognized by a DPDA, such as 
\[
\left\{ a^i b^j \mid \mbox{$i > 0$ and $j=i$ or $j=2i$} \right\} \, . 
\]
Thus the context-free languages can be stratified into several subclasses, 
\[
\textrm{DPDA} \subset \textrm{unambiguous} \subset \textrm{NPDA} \, . 
\]

We can't resist pointing out that unambiguous grammars have lovely algebraic properties.  Given a language $L$, define its \emph{generating function} as 
\[
g(z) = \sum_{\ell=0}^\infty n_\ell z^\ell \, , 
\]
where $n_\ell$ is the number of words in $L$ of length $\ell$.  
\begin{exercise}
Using the generalization of this grammar to $k$ types of brackets, show that the generating function of $D_k$ obeys the quadratic equation
\begin{equation}
\label{eq:wilf}
g(z) = k z^2 g(z)^2 + 1 \, .
\end{equation}
Solve this equation, judiciously choosing between the two roots, and obtain
\[
g(z) = \frac{1-\sqrt{1-4kz^2}}{2 k z^2} \, , 
\]
and therefore, for all even $\ell$, 
\[
n_\ell = \frac{k^{\ell/2}}{\ell/2+1} {\ell \choose \ell/2} \, . 
\]
\end{exercise}

\begin{exercise}
Argue that the generating function of any unambiguous language obeys a system of polynomial equations with integer coefficients analogous to~\eqref{eq:wilf}.
\end{exercise}
\noindent 
As a result of this exercise, one way to prove that a language is inherently ambiguous is to show that its generating function is transcendental, i.e., not the root of a system of polynomial equations.  See Chomsky and Sch{\"u}tzenberger, ``The Algebraic Theory of Context-Free Languages,'' in \emph{Computer Programming and Formal Systems}, P. Braffort and D. Hirschberg (Eds.), 118--161 (1963).  You can learn more about generating functions in Wilf's lovely book \emph{generatingfunctionology}.

How accurate are context-free grammars as models of human languages?  Certainly real-life grammars, which let us place subordinate clauses, and other structures, in the middle of a sentence, have some stack-like structure.  Moreover, we feel that the claim that the stacks of English speakers, unlike those of German speakers whose depth is rumored to be five or so, have a depth of just one or two, while oft-repeated, is unjustified.  After all, the following phrase is perfectly understandable:
\begin{center}
This is the malt that the rat that the cat that the dog worried killed ate, that lay in the house that Jack built.
\end{center}
\noindent
On the other hand, it's not obvious to us that the human brain contains anything remotely resembling a stack.  It seems to be easier to remember strings of words in the same order we heard them, as opposed to reverse order, which suggests that we have something more like a queue.  There are some languages, like Swiss German, that have queue-like constructions of the form $abcabc$, and it isn't much harder for a human to recognize $\{a^n b^n c^n\}$ than it is to recognize $\{a^n b^n\}$.  Presumably we have a soup of mental symbols with varying urgency, creating and interacting with each other a wide variety of ways.

However, many computer languages are context-free, or nearly so.  In programming languages like C, Java, ML, and so on, we use brackets to open and close blocks of code, such as loops, if-then statements, and function definitions, and we use parentheses to build up mathematical expressions like $(x+y)*z$.  These languages are designed so that a compiler can parse them in linear time, reading through a program's source code and quickly converting it to machine code.

\section{Context-Sensitive Grammars}

In a context-free grammar, the left-hand side of each production rule is a single variable symbol.  \emph{Context-sensitive} grammars allow productions to depend on neighboring symbols, and thus replace one finite string with another.  However, we demand that the grammar is \emph{noncontracting}, in that productions never decrease the length of the string.  Thus they are of the form
\begin{equation}
\label{eq:noncon}
\mbox{$u \to v$ where $u,v \in (V \cup T)^*$ and $|u| \le |v|$} \, .
\end{equation}
If we like, we can demand that $u$ contain at least one variable.  To generate the empty word, we allow the rule $S \to \eps$ as long as $S$ doesn't appear in the right-hand side of any production rule.  We say that a language is context-sensitive if it can be generated by a context-sensitive grammar.

Here is a context-sensitive grammar for the copy language.  We have $V=\{S,C,A,B\}$, $T=\{a,b\}$, and the production rules
\begin{gather*}
S \to aSA, bSB, C \\
CA \to Ca, a \\
CB \to Cb , b \\
aA \to Aa \, , \quad aB \to Ba \, , \quad bA \to Ab \, , \quad bB \to Bb \, . 
\end{gather*}
These rules let us generate a palindrome-like string such as $abSBA$.  We then change $S$ to $C$ and use $C$ to convert upper-case variables $A, B$ to lower-case terminals $a, b$, moving the terminals to the right to reverse the order of the second half.  With the last of these conversions, we erase $C$, leaving us with a word in $\Lcopy$.  

For instance, here is the derivation of $abab$:
\begin{align*}
S &\to aSA \\
&\to abSBA \\
&\to abCBA \\
&\to abCbA \\
&\to abCAb \\
&\to abab \, . 
\end{align*}
Note that if we erase $C$ prematurely, we get stuck with variables $A, B$ that we can't remove.

\begin{exercise}
Give context-sensitive grammars for these languages.
\begin{enumerate}
\item $\left\{a^n b^n c^n \mid n \ge 0 \right\}$.
\item $\left\{ a^{2^n} \mid n \ge 0 \right\} = \{ a, aa, aaaa, aaaaaaaa, \ldots \}$.
\item $\{ 0, 1, 10, 11, 101, 1000, 1101, \ldots \}$, the Fibonacci numbers written in binary.  Hint: include the variables $\begin{pmatrix} 0 \\ 0 \end{pmatrix}$, $\begin{pmatrix} 0 \\ 1 \end{pmatrix}$, $\begin{pmatrix} 1 \\ 0 \end{pmatrix}$, and $\begin{pmatrix} 1 \\ 1 \end{pmatrix}$, 
\end{enumerate}
\end{exercise}

A stricter definition of a context-sensitive rule replaces a single variable $A$ with a nonempty string $s$, while allowing strings $u$ and $v$ on either side to provide context:
\begin{equation}
\label{eq:cs}
\mbox{$uAv \to usv$ where $A \in V$, $u,v,s \in (V \cup T)^*$, and $s \ne \eps$} \, .
\end{equation}
Any grammar of the form~\eqref{eq:noncon} can be converted to one of the form~\eqref{eq:cs}.  We won't ask you to prove this, but you may enjoy this exercise:

\begin{exercise}
By adding more variables, show how to convert the rule $AB \to BA$ into a chain of rules of the form~\eqref{eq:cs}.
\end{exercise}

\noindent
As you can see, context-sensitive grammars are very powerful.  We will see just how powerful in the next section.

\section{Turing Machines and Undecidability}

The machines we have seen so far read their input from left to right---they cannot revisit symbols of the input they have seen before.  In addition, the input is read-only---the machine can't modify the symbols of the input and use it as additional space for its computation.  In this section, we look at a kind of machine that can move left or right on the input, and read and write symbols anywhere it likes in an infinite workspace.  These machines are far more powerful that any we have seen before.  Indeed, they are as powerful as any machine can be.

In 1936, Alan Turing proposed a mathematical model of a computer---a human computer, carrying out an algorithm with pencil and paper.  This \emph{Turing machine} has a finite-state machine ``head'', and a ``tape'' where each square contains a symbol belonging to a finite alphabet $A$.  At each step, it observes the symbol at its current location on the tape.  Based on this symbol and its internal state, it then updates the symbol at its current location on the tape, updates its internal state, and moves one step left or right on the tape.  If $A$ denotes the tape alphabet and $S$ denotes the set of internal states, we can write the machine's behavior as a transition function
\[
\delta : A \times S \to A \times S \times \{ +1, -1 \} \, . 
\]
For instance, $\delta(a,s) = (a',s',+1)$ would mean that if the machine sees the symbol $a$ on the tape when it is in state $s$, then it changes the tape symbol to $a'$, changes its state to $s'$, and moves to the right.  We show this in Figure~\ref{fig:tm-one-step}.

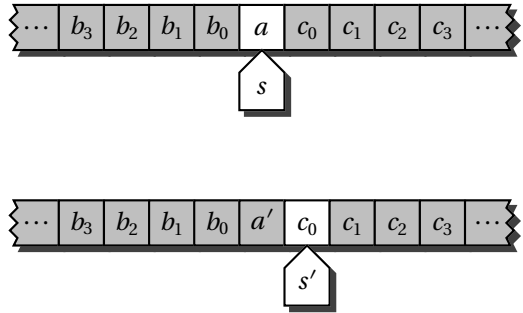
\begin{figure}
\centering
\begin{pspicture}(0,0)(7,4)
\psset{dimen=middle}
\psset{arrowsize=2pt 4}

\psset{fillstyle=solid,fillcolor=lightgray,shadow=true}
\pspolygon(0.2,3.4)(0.25,3.5)(0.15,3.6)(0.25,3.7)(0.15,3.8)(0.25,3.9)(0.2,4)(0.8,4)(0.8,3.4)
\rput(0.5,3.7){$\ldots$}
\psframe(0.8,3.4)(1.4,4)\rput(1.1,3.7){$b_3$}
\psframe(1.4,3.4)(2.0,4)\rput(1.7,3.7){$b_2$}
\psframe(2.0,3.4)(2.6,4)\rput(2.3,3.7){$b_1$}
\psframe(2.6,3.4)(3.2,4)\rput(2.9,3.7){$b_0$}
\psframe[fillcolor=white](3.2,3.4)(3.8,4)\rput(3.5,3.68){$a$}
\psframe(3.8,3.4)(4.4,4)\rput(4.1,3.66){$c_0$}
\psframe(4.4,3.4)(5.0,4)\rput(4.7,3.66){$c_1$}
\psframe(5.0,3.4)(5.6,4)\rput(5.3,3.66){$c_2$}
\psframe(5.6,3.4)(6.2,4)\rput(5.9,3.66){$c_3$}
\pspolygon(6.2,3.4)(6.8,3.4)(6.75,3.5)(6.85,3.6)(6.75,3.7)(6.85,3.8)(6.75,3.9)(6.8,4)(6.2,4)
\rput(6.5,3.7){$\ldots$}
  
\psset{fillcolor=white}
\pspolygon[dimen=outer](3.2,2.6)(3.2,3.1)(3.5,3.4)(3.8,3.1)(3.8,2.6)
\rput(3.5,2.9){$s$}

\psset{fillstyle=solid,fillcolor=lightgray,shadow=true}
\pspolygon(0.2,0.8)(0.25,0.9)(0.15,1.0)(0.25,1.1)(0.15,1.2)(0.25,1.3)(0.2,1.4)(0.8,1.4)(0.8,0.8)
\rput(0.5,1.1){$\ldots$}
\psframe(0.8,0.8)(1.4,1.4)\rput(1.1,1.1){$b_3$}
\psframe(1.4,0.8)(2.0,1.4)\rput(1.7,1.1){$b_2$}
\psframe(2.0,0.8)(2.6,1.4)\rput(2.3,1.1){$b_1$}
\psframe(2.6,0.8)(3.2,1.4)\rput(2.9,1.1){$b_0$}
\psframe(3.2,0.8)(3.8,1.4)\rput(3.5,1.15){$a'$}
\psframe[fillcolor=white](3.8,0.8)(4.4,1.4)\rput(4.1,1.06){$c_0$}
\psframe(4.4,0.8)(5.0,1.4)\rput(4.7,1.06){$c_1$}
\psframe(5.0,0.8)(5.6,1.4)\rput(5.3,1.06){$c_2$}
\psframe(5.6,0.8)(6.2,1.4)\rput(5.9,1.06){$c_3$}
\pspolygon(6.2,0.8)(6.8,0.8)(6.75,0.9)(6.85,1.0)(6.75,1.1)(6.85,1.2)(6.75,1.3)(6.8,1.4)(6.2,1.4)
\rput(6.5,1.1){$\ldots$}
  
\psset{fillcolor=white}

\pspolygon[dimen=outer](3.8,0)(3.8,0.5)(4.1,0.8)(4.4,0.5)(4.4,0)
\rput(4.1,0.35){$s'$}
  
\end{pspicture}
\caption{One step of a Turing machine.  It changes the tape symbol from $a$ to $a'$, changes its internal state from $s$ to $s'$, and moves one step to the right.}
\label{fig:tm-one-step}
\end{figure}

We start the machine by writing the input string $x$ on the tape, with a special blank symbol \blank\ written on the rest of the tape stretching left and right to infinity.  We place the head at the left end of the input, in some initial state $\sinit \in S$.  There is a special state $\saccept \in S$ where the machine halts its computation, and we say that it accepts the input $x$ if the computation path ends at $\saccept$.  There can be other halt states as well in which the machine rejects the input, but it can also reject by running forever.  

\begin{exercise}
Show that a Turing machine that can never move to the left can only recognize regular languages.
\end{exercise}

\begin{exercise}
Show that a Turing machine that never changes the symbols written on the tape can only recognize regular languages.  Hint: for each pair of adjacent tape squares, define the \emph{crossing sequence} as the history of the machine's moves left or right across the boundary between them, and its internal state at the time.  Show that, unless the machine falls into an infinite loop, only a finite number of different crossing sequences can occur.  Then describe how to check that these sequences can be woven together into a consistent computation path.  Allowing the machine to make excursions into the blank areas on either side of the input adds another wrinkle.
\end{exercise}

\begin{exercise}
Let a $k$-stack PDA be a finite-state automaton with access to $k$ stacks that reads its input left-to-right.  Show that if it is allowed to make $\eps$-transitions, taking additional computation steps after it has read the input, then a $k$-stack PDA can simulate a Turing machine whenever $k \ge 2$.  Show the same thing for a finite-state automaton with access to a single queue that can make $\eps$-transitions (unlike the real-time queue automata mentioned above).
\end{exercise}

Giving a Turing machine multiple heads, multiple tapes, a two-dimensional tape, a random-access memory, and so on doesn't increase its computational power.  Each of these new machines can be simulated by a Turing machine, although it might be considerably more efficient.  Indeed, Turing made a convincing philosophical case that these machines can carry out any procedure that could reasonably be called an algorithm.  They are equivalent to virtually every model of computation that has been proposed, including all the programming languages that we have today, and any physically realistic form of hardware that we can imagine.  Given enough time, they can even simulate quantum computers.

Turing then used a clever diagonal argument to show that there is no algorithm that can tell, in finite time, whether a given Turing machine (described in terms of its transition function) will accept a given input.  We say that the Halting Problem is \emph{undecidable}.  We can run the Turing machine as long as we like, but we cannot see into the infinite future and tell whether or not it will someday halt.  It is also undecidable whether a given Turing machine accepts the empty word---that is, whether it will halt and accept if its initial tape is entirely blank.

The fact that we can't predict the eventual behavior of a device capable of universal computation is perhaps not too surprising.  But the undecidability of the Halting Problem has consequences for simpler machines as well.  Just as we did for PDAs in Theorem~\ref{thm:npda-cfl}, let's write a Turing machine's computation path as a string of symbols of the form $(a,s,\pm 1)$, indicating at each step what state it moved to, what symbol it wrote on the tape, and which way it went.  Say that such a string is \emph{valid} if it is consistent with the machine's transition function and with the symbols on the tape, i.e., if symbols written on the tape are still there when the machine comes back to them, rather than surreptitiously changing their values.  Now consider the following:

\begin{exercise}
Show that the language of valid, accepting computation paths of a given Turing machine, given an initially blank tape, is
\begin{enumerate}
\item the intersection of two DPDA languages, and
\item the complement of a context-free language.
\end{enumerate}
Hint: break the tape into two halves.  
\end{exercise}

\begin{exercise}
Show that the following problems are undecidable, since if there were an algorithm for any of them, we could use it to tell whether a given Turing machine accepts the empty string:
\begin{enumerate}
\item given two context-free grammars $G_1$ and $G_2$, is $L(G_1) \cap L(G_2) = \emptyset$?
\item given two DPDAs $P_1$ and $P_2$, is $L(P_1) \cap L(P_2) = \emptyset$?
\item given two DPDAs $P_1$ and $P_2$, is $L(P_1) \subseteq L(P_2)$?
\item {\bf the universe problem:} given a context-free grammar or NPDA, does it generate or accept all possible words?
\item {\bf the equivalence problem:} given two context-free grammars or two NPDAs, do they generate or recognize the same language?
\end{enumerate}
\end{exercise}

\begin{exercise}
Strengthen the previous exercise to show that the language of valid, accepting paths is the complement of a language recognized by a nondeterministic one-counter machine.  Conclude that the universe and equivalence problems are undecidable for $1$-NCAs.
\end{exercise}

In particular, unlike DFAs, there is no algorithm that can take a context-free grammar, or an NPDA, or even a $1$-NCA, and find the minimal grammar or machine that is equivalent to it---since we can't even tell whether it is equivalent to a trivial grammar or machine that generates or accepts all words.  The equivalence problem (and therefore the universe problem) is decidable for DPDAs (and therefore for $1$-DCAs), although this was only shown quite recently; see S{\'e}nizergues, \emph{Theoretical Computer Science} 251:1--166 (2001).
We will not prove this here, but it is also undecidable to tell whether a given context-free grammar is unambiguous.  The proof uses the undecidability of  the Post Correspondence Problem, see Section~7.6.3.

We can define nondeterministic Turing machines by letting the transition function be multiple-valued and accepting a word if an accepting computation path exists.  However, deterministic TMs can simulate nondeterministic ones.  Writing the simulation explicitly is tedious, but the idea is to do a breadth-first search of the tree of all possible computation paths, and accept if any of them accept.  Thus, as for lowly finite-state automata, nondeterminism adds nothing to the power of these machines---as long as we don't care how long they take.

Turing machines also have a grammatical description.  In an \emph{unrestricted} grammar, a production rule can convert any finite string in $(V \cup T)^*$ to any other.  
These grammars can carry out anything from simple symbol-shuffling to applying the axioms of a formal system---the words they generate are the theorems of the system, the initial symbol and the production rules are the axioms, and the sequence of productions is the proof.  Unlike a context-sensitive grammar, rules can both increase and decrease the length of the string.  Thus the intermediate strings might be far longer than the string they generate, just as the proof of a theorem can involve intermediate statements far more complicated than the statement of the theorem itself.

Above, we spoke of a TM as starting with an input word written on the tape and accepting or rejecting it.  But a nondeterministic TM can also generate words, by starting with a blank tape, carrying out some sequence of transitions that write the word on the tape, and then halting.  If we treat the entire state of the TM as a string, with the head's state embedded in the tape, and extending this string whenever we move into a blank square we haven't visited before, we can transform its transition function to a grammar where each production carries out a step of computation.  Conversely, by writing and erasing symbols on the tape, a nondeterministic Turing machine can carry out the productions of an unrestricted grammar.

This shows that a language can be generated by an unrestricted grammar if and only if it can be generated by a nondeterministic TM.  But if a language can be generated by a TM $M$, it can also recognized by another TM that starts with $w$ on the tape, runs $M$ backwards (nondeterministically choosing the sequence of steps) and accepts if it reaches a blank initial tape.  Thus these three classes of language are identical:
\begin{enumerate}
\item languages generated by an unrestricted grammar, 
\item languages generated by some nondeterministic Turing machine,
\item languages recognized by some nondeterministic (or deterministic) Turing machine.
\end{enumerate}

If unrestricted grammars correspond to Turing machines, and context-free grammars correspond to push-down automata, what kind of machines do context-sensitive grammars correspond to?  A \emph{linear bounded automaton} or LBA is a nondeterministic Turing machine that is limited to the part of the tape on which its input is written---it may not move into the blanks on either side of the input.  However, it can have a tape alphabet $\Gamma$ larger than then input alphabet $A$, containing additional symbols it uses for computation.  

Since context-sensitive grammars never shorten the length of the string, they never need an intermediate string longer than the final string they generate.  By carrying out the production rules nondeterministically, an LBA can generate any word in a context-sensitive language with no more tape than the final word is written on.  Working backwards, an LBA can reduce an initial word to a blank tape, again without using more tape that it takes to store the word.  Thus a language is context-sensitive if and only if it can be generated, or recognized, by an LBA.  

What if the Turing machine has a little more room, say $C|w|$ squares for some constant $C$?  By enlarging the tape alphabet from $\Gamma$ to $\Gamma^C$, we can simulate such a machine with one with exactly $|w|$ squares, i.e., an LBA.  In modern terms, a context-sensitive language is one that can be recognized in $\bigo(n)$ space.  However, this is still an extremely powerful complexity class, including many problems we believe require exponential time.  Indeed, recognizing a given context-sensitive language is PSPACE-complete (see Chapter~8).

\begin{table}
\begin{center}
\begin{tabular}{l|l}
grammar & automaton \\ \hline
regular grammars & finite-state machines \\
context-free grammars & push-down automata \\
context-sensitive grammars & linear-bounded automata \\
unrestricted grammars & Turing machines
\end{tabular}
\end{center}
\caption{The Chomsky hierarchy.}
\label{tab:chomsky}
\end{table}

We have seen four classes of grammars that correspond to four classes of machines as shown in Table~\ref{tab:chomsky}.  In 1956, Noam Chomsky proposed these classes as a hierarchy of models for human language.  Since then, many other types of grammars and automata have been proposed---subclasses of the context-free grammars that are particularly easy to parse, and others with varying degrees of context-sensitivity.  Some of these intermediate classes even show up in chaotic dynamical systems: see Moore and Lakdalawa, \emph{Physica D} 135:24--40 (2000).

However, as our comments about stacks and queues above illustrate, computational complexity is not just a hierarchy, but a heterarchy.  Complexity classes can be incomparable, with each containing problems outside the other.  Just as stacks and queues have different strengths and weaknesses, different computational resources act in different ways---algorithms that are efficient in terms of time can be inefficient in terms of memory, and vice versa.

The Chomsky hierarchy is just a tiny piece of an infinite complexity hierarchy---and now it's time to read the rest of the book!

\end{document}